\newtheorem{proposition}{Proposition}
\begin{document}

\title{Optimization of Free Space Optical Wireless Network for Cellular Backhauling}

\author{\IEEEauthorblockN{Yuan Li\IEEEauthorrefmark{1}, Nikolaos Pappas\IEEEauthorrefmark{3}, Vangelis Angelakis\IEEEauthorrefmark{3}, Micha{\l} Pi\'{o}ro\IEEEauthorrefmark{1}\IEEEauthorrefmark{2}, and Di Yuan\IEEEauthorrefmark{3}}
\IEEEauthorblockA{\IEEEauthorrefmark{1}Department of Electrical and Information Technology, Lund University, Lund, Sweden}
\IEEEauthorblockA{\IEEEauthorrefmark{2}Institute of Telecommunications, Warsaw University of Technology, Warsaw, Poland}
\IEEEauthorblockA{\IEEEauthorrefmark{3}Department of Science and Technology, Link\"{o}ping University, Norrk\"{o}ping, Sweden \\
Email: yuan.li@eit.lth.se, \{nikolaos.pappas,vangelis.angelakis\}@liu.se, mpp@tele.pw.edu.pl, di.yuan@liu.se}

\thanks{The authors would like to acknowledge networking support by the COST Action IC1101 OPTICWISE (Optical Wireless Communications - An Emerging Technology). The work of M.~Pi\'{o}ro was supported by National Science Center (Poland) under grant 2011/01/B/ST7/02967.}
}

\maketitle

\begin{abstract}

With densification of nodes in cellular networks, free space optic (FSO) connections are becoming an appealing low cost and high rate alternative to copper and fiber as the backhaul solution for wireless communication systems. To ensure a reliable cellular backhaul, provisions for redundant, disjoint paths between the nodes must be made in the design phase. This paper aims at finding a cost-effective solution to upgrade the cellular backhaul with pre-deployed optical fibers using FSO links and mirror components. Since the quality of the FSO links depends on several factors, such as transmission distance, power, and weather conditions, we adopt an elaborate formulation to calculate link reliability. We present a novel integer linear programming model to approach optimal FSO backhaul design, guaranteeing $K$-disjoint paths connecting each node pair. Next, we derive a column generation method to a path-oriented mathematical formulation. Applying the method in a sequential manner enables high computational scalability. We use realistic scenarios to demonstrate our approaches efficiently provide optimal or near-optimal solutions, and thereby allow for accurately dealing with the trade-off between cost and reliability.

\end{abstract}

\IEEEpeerreviewmaketitle

\section{Introduction}

A cellular backhaul comprises the connections between base stations and the core network components (e.g., radio network controller and base station controller). With high-speed data services and network densification using small cells, the need of upgrading the backhaul and increasing its capacity is rapidly growing \cite{Ford2013}.
At present, three transport media are primarily used for backhaul solutions: copper (about $90\%$), microwave radio links (about $6\%$), and optical fibers (about $4\%$)~\cite{Tipmongkolsilp2011}. Leased copper lines are becoming an infeasible options for meeting future backhaul demands, as the data rate is low while the price increases linearly with capacity. On the other hand, optical fiber can support very high data rates but need substantial initial investment. Conventional radio-frequency (RF) technologies have been widely studied for backhauling, and the RF mesh networking paradigm has attracted significant attention for backhaul topology. However, RF links have rather limited data rates, and are prone to interference and security problems. As frequencies go up with millimeter waves, RF transmissions are also hampered by distance and even weather conditions~\cite{Frey99}. What is more, the licensed part of the spectrum comes at additional costs \cite{Chia2009}.

Free space optics (FSO)-based networking is an attractive alternative for the next generation cellular networks~\cite{Demers2011}. An FSO link uses the free space between a pair of laser-photodetector transceivers to transport data. The FSO beam has a wavelength in the micrometer range, yielding advantages in terms of free license, interference immunity, and high bandwidth, among others. Such FSO wireless transceivers are already commercially available~\cite{Chan06}, and can be deployed to establish optical links to support several gigabits per second over a distance of a few kilometers with fast deployment. FSO links are considerabley more cost-efficient than optical fibers. As they operate with similar data rates, FSO and optical fibers are easily combined together in a network \cite{Refai2006}.

With the advantages it brings, the FSO technology is a good complementary option for traditional radio-based wireless technologies. However, there are also inherent difficulties for the deployment of an FSO-based system. FSO links can be set up only when both end nodes are in line-of-sight, and the link availability is sensitive to weather conditions, such as fog and precipitation~\cite{Ghassemlooy2012}. Thus the aspect of reliability is of high significant in deploying an FSO-based network.

In this paper, we address the network optimization problem of upgrading a fiber optical backhaul with FSO technology. Since optical fibers are insensitive to weather conditions, any pair of nodes connected by an optical fiber is treated as reliable. For FSO links, on the other hand, reliability is explicitly considered in our system model. A specific aspect in the network design problem we consider is the possible use of mirrors. When two FSO nodes are not in line-of-sight, mirrors installed at a third node may potentially provide connectivity. An extension is to connect two FSO nodes with a sequence of mirrors. However, as mirrors do not provide any amplification, the total distance of such a mirror path is subject to a limit.

The network design optimization problem consists in determining the locations of setting up FSO links and mirror links to provide connectivity, possibly via multiple hops, between all FSO nodes with traffic demand. A specific design concern is to guarantee a desired degree of resilience to cope. That is, the FSO nodes remain connected even if some of the wireless optical links become unavailable. This, so called network survivability, is formulated using $K$-connectivity, i.e., there are (at least) $K$ link-disjoint paths connecting each pair of FSO nodes, where $K$ is a parameter. A network topology example with $2$-connectivity is shown in Fig. \ref{fig:scenario}.

\begin{figure*}
\centering
\includegraphics[width=4in]{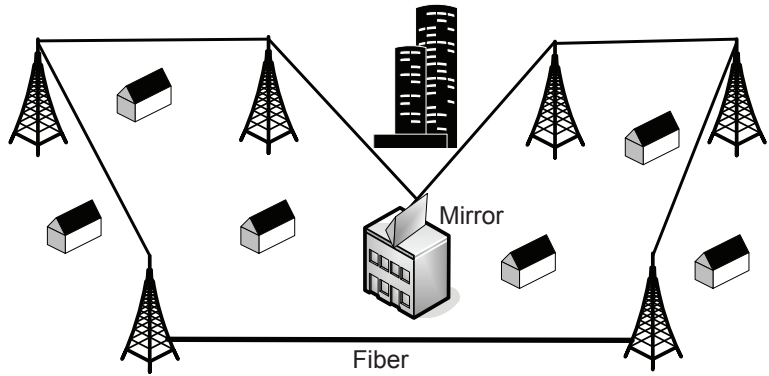}
\caption{A backhaul network deployment scenario with FSO transceivers, mirrors and fibers}\label{fig:scenario}
\end{figure*}

The performance objectives are deployment cost and network reliability. The former ranges from the cost of FSO transceivers, mirrors, to that of leasing building roofs for installing mirrors, whereas the latter is dependent on link lengths and the number of links. Considering these aspect along with $K$-connectivity lead to a comprehensive system model. We present following contributions to the FSO network optimization problem. First, we derive a novel integer linear programming approach that deals with the cost, reliability, and $K$-connectivity aspects using FSO and mirror links. The approach guarantees global optimality for up to medium-size planning scenarios. Next, for better scalability when dealing with large-size scenarios, we develop an alternative, path-oriented optimization formulation. This formulation allows for problem decomposition, by making use of a column generation method that finds candidate paths for each FSO node pair in an incremental manner. Our second solution approach consists in applying the column generation method sequentially to the FSO node pairs, to enable obtaining near-optimal solutions time-efficiently. We report computational results for realistic planning scenarios of FSO backhaul design, to demonstrate the viability of the optimization framework as well as to shed light on the trade-off between cost and reliability.

The remainder of the section is organized as follows. Related works are reviewed in Section \ref{sec:relate}.  Section \ref{sec:system} introduces the FSO channel model and basic notations. The exact integer programming model is presented in Section \ref{sec:formulation}, and then in Section \ref{sec:approach}, the sequential computation approach is elaborated. The numerical study illustrating optimal solutions and comparing algorithms is conducted in Section \ref{sec:numerical}. Finally, Section \ref{sec:conclusion} concludes the paper.

\section{Related work} \label{sec:relate}
In the survivable network design there is a significant body of works. Various mathematical programming models have been proposed for designing $K$-connected graphs, see for example
\cite{Kerivin2005} and references therein. Papers \cite{Khandekar2013,Bendali2010,Botton2013} study the problem of $K$-connected network design with node degree constraint or limiting the number of hops. An algorithm for survivable network design with reliable links, which can be shared by link-disjoint paths is proposed in \cite{Zotkiewicz2010}.

In our paper, we consider the design of a $K$-connected graph with both reliable links (optical fibers) and mirror links. Furthermore, we constrain the total length of consecutive mirror links, which may only be part of a path connecting a pair of FSO transceivers. The previous constraint differentiates our work from the traditional problem which limits the number of the hops per path.

With FSO being an affordable alternative to optical fibers for high data-rate communication,  there is a vast amount of research focused on FSO networking. The majority of works studies the factors (weather, alignment, turbulence, etc.) that affect the link performance and also investigates different techniques (modulation and coding schemes, electronics design, etc.) to improve the FSO link quality (see for example \cite{Borah2012} and the references therein). On the other hand, work that focuses on network optimization aspects on FSO communications are rather limited. Recently though the problem of network topology design has attracted attention.

In \cite{Llorca2004, Zhuang2004, Cao2008} simple topology design models have been studied for the FSO networks. The work \cite{Llorca2004} develops heuristics to design $2$-degree topologies ($2$ transceivers per node) and $3$-degree topologies ($3$ transceivers per node) with a minimum number of links. Paper \cite{Zhuang2004} adopts a combination of multiple objectives, such as minimizing the cost in the physical layer and the congestion in the logical layer to generate a $2$-degree topology heuristically. The paper \cite{Cao2008} presents an integer programming model to maximize the network throughput by installing as many as possible FSO links when the number of optical transceivers per node is limited. Similarly \cite{Abhish2007} proves the problem to be NP-hard and proposes algorithms for integrated topology control
and single-path or multi-path routing. In our work, we do not restrict the number of FSO transceivers in each node since in practice, as FSO transceivers can be collocated (being mounted at different heights) in the same node.

Integer programming models for designing topologies based on FSO have also been proposed. In \cite{Son2010} the backbone of wireless mesh networks with FSO links is designed by maximizing the so called algebraic connectivity. However, the algebraic connectivity cannot capture the resilience of connections for each pair of nodes.  The same objective is also studied in paper \cite{Zhou2013}.
The \cite{Ouveysi2010} presents a model that performs load balancing and provides link-disjoint paths for each pair of nodes. However, the proposed simple algorithm for finding link-disjoint paths does not take into account that finding shortest path one-by-one may not produce feasible link-disjoint paths (see appendix C.4.1 in \cite{pioro2004}).
Joint topology design and load balancing in FSO networks is addressed in \cite{Son2014} where the reformulation linearization technique (RLT) is applied to obtain linear programming (LP) relaxations of the original complex problem, and then incorporate the LP relaxations into a branch-and-bound framework.

\section{System model}\label{sec:system}

\subsection{Link Model}

FSO is an optical communication technology that uses light propagating in the free space to transmit data using a laser beam, in a wireless manner between two FSO transceivers. During propagation, the beam is attenuated due to photon absorption and scattering. Thus, the propagation loss of FSO links is affected by rain, fog, wind, temperature, pointing errors, etc. \cite{Ghassemlooy2012}. Different channel models have been proposed considering the impact of the atmospheric condition. A widely used model is the log-normal channel model (the amplitude fluctuation is log-normal), which is suitable for weak atmospheric turbulence in clear weather condition. The probability density function of the received irradiance $I$ is $p(I) = \frac{1}{2 \sqrt{2\pi} I \sigma_X} \mbox{exp} \{-\frac{(\mbox{ln} {I} - \mbox{ln} {I_0})^2}{8\sigma^2_X}\}$ where $\sigma_X$ is covariance of the log-amplitude fluctuation $X$, $I_0$ is the average irradiance when there is no turbulence in the channel \cite{Zhu2002}. The standard deviation $\sigma^2_X$ is approximated by $\sigma^2_X = 0.30545 (\frac{2 \pi}{\lambda})^{7/6} C^2_n (h) l ^{11/6}$, where $\lambda$ is the wavelength, $l$ is the transmission distance, and $C^2_n (h)$ is the so called \emph{index of refraction structure parameter} with a constant altitude $h$, which expresses the strength of the atmospheric turbulence.

The reliability of a FSO link with length $l$ is defined as the cumulative probability of the irradiance above a threshold of the received signal intensity $I_{th}$. It can be computed as in equation \eqref{channel} according to \cite{Son2010} where $erf(\cdot)$ is the error function. An FSO link can be established when its reliability is bigger than a given a reliability threshold $\Gamma_{th}$. Note that the value of the reliability of a link take values between $0$ and $1$.
\begin{equation}\label{channel}
\Gamma(l) = \int_{I_{th}}^{\infty} p(I) dI  = \frac{1}{2} - \frac{1}{2} erf \left(\frac{\mbox{ln}(I_{th}/I_0)}{2 \sigma_X \sqrt{2}} \right).
\end{equation}

For an atmospheric channel close to the ground, e.g., $h < 18.5$ m,  $C^2_n (h)$ ranges from $10^{-13}$ to $10^{-17}$ $m^{-2/3}$ for a strong to weak atmospheric turbulence. For a fixed ratio $I_{th}/I_0$ and constant weather turbulence, the reliability of the link depends on the transmission distance, and the longer the link, the smaller the reliability.

\subsection{Network Model}
For the problem considered in this paper we assume that an initial (undirected) network graph is given, including a set of nodes $\mathcal V$ and a set of potential links $\mathcal E$ (see Table~\ref{tab:notations}).
The set of nodes consists of the set of the base stations for deploying FSO transceivers (FSO nodes, $\mathcal V^F$), the buildings (mirror nodes, $\mathcal V^M$) on which mirrors can be deployed, i.e., $\mathcal V = \mathcal V^F \cup \mathcal V^M$ where $\mathcal V^F, \mathcal V^M$ are disjoint. There are three types of links: fiber links, FSO links, and mirror links. The set of all fiber links is denoted by $\mathcal E^O$, the set of all potential FSO links -- by $\mathcal E^F$, and the set of all potential mirror links -- by $\mathcal E^M$. The set of links is thus defined as $\mathcal E = \mathcal E^O \cup \mathcal E^F \cup \mathcal E^M$ where the sets $\mathcal E^O, \mathcal E^F$ and $\mathcal E^M$ are disjoint. Links are undirected and the end nodes of link $e \in \mathcal E$ are denoted by $p_e,q_e \in \mathcal V$.

The fiber links constituting set $\mathcal E^O$ are just links between FSO nodes established over the fibers and are assumed to be already deployed in the considered backhaul network.
Each FSO link $e \in \mathcal E^F$ connects two FSO nodes $p_e$ and $q_e$ and is composed of parallel mirror paths between $p_e$ and $q_e$. Each such path is composed of a sequence (possibly empty) of mirror links and is equipped with two FSO transceivers placed in its end nodes ($p_e$ and $q_e$), and the mirrors in the transit (mirror) nodes. When the mirror path is empty then two FSO transceivers of link $e$ are connected by a direct laser beam (the end nodes must be in the line of sight). Note that each mirror path of an FSO link (including the empty path) can be used several times and then each copy is equipped with its own FSO transceivers and mirrors. In a mirror link $e \in \mathcal E^M$ at least one end node is a mirror node and the other is either another mirror node or an FSO node. The FSO links with the end nodes not in the line of sight are represented by the subset $\mathcal E^N  \subseteq  \mathcal E^F$.

The sets of bi-directed arcs corresponding to the sets of undirected fiber links, FSO and mirror links are denoted by $\mathcal A^O, \mathcal A^F, \mathcal A^M$, respectively. Each undirected link $e \in \mathcal E$ corresponds two oppositely directed arcs $a', a'' \in \mathcal A$. We define a mapping $f: \mathcal A \rightarrow \mathcal E$ such that $f(a')= e$, $f(a'')=e$ and $f^{-1}(e)=\{a',a''\}$.
We use $\delta^+_{av}$ to represent whether arc $a$ is outgoing from node $v$, $\delta^+_{av}=1$, or not, $\delta^+_{av}=0$. $\delta^-_{av}$ indicates whether arc $a$ is incoming to node $v$, $\delta^-_{av}=1$, or not, $\delta^-_{av}=0$.

Two paths with the same end nodes are called link-disjoint when they have no FSO link nor mirror link in common, but may contain the common fiber links. We aim at constructing a subgraph that assures $K$ link-disjoint paths for each pair of FSO nodes, meaning that the subgraph is $K$-connected in this sense.
We formulate the problem by utilizing a multi-commodity flow model. Each pair of FSO nodes is treated as a commodity; the set of all such commodities is denoted by $\mathcal D$. The source and the destination of a commodity $d$ are represented by $s_d \in \mathcal E^F$ and $t_d \in \mathcal E^F$, respectively.

\begin{figure}[!htbp]
\centering
\includegraphics[height=1.3in,width=5in]{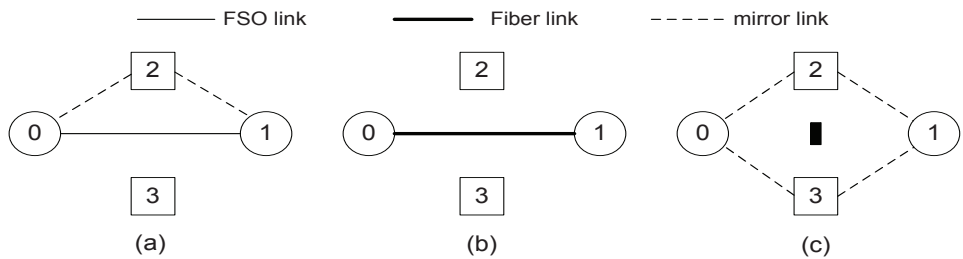}
\caption{An example of 2-connected pairs}
\label{fig:disjointpaths}
\end{figure}

In the considered model, FSO transceivers, fibers and mirrors are taken into consideration. They play different roles in the construction of the $K$-connected graph. A direct FSO link or a mirror link can be established only when there is line of sight between their end-nodes. The fiber links are assumed to already exist and we just use them for designing the paths.
For a given commodity, the paths sharing a fiber link are treated as link-disjoint, since fibers are much more reliable than the FSO and mirror links. The mirror nodes are used to connect FSO nodes which are not in the line of sight, or to create parallel paths to assure path diversity. Note that not all mirror nodes will in general be used in the optimal solution. When a mirror path traverses a mirror node, it is assigned its own mirror at that node.

Fig. \ref{fig:disjointpaths} (where circles, squares and black rectangles represent FSO nodes, mirror nodes and obstacles, respectively) illustrates three cases of link-disjoint paths between nodes $0$ and $1$: (a) one path consists of mirrors and the other path consists of an FSO link; (b) an optical fiber can be treated as a pair of link-disjoint paths; (c) two link-disjoint paths consists of sole mirror links due to an obstacle between node $0$ and node $1$.

\begin{table}
\centering
\setlength{\tabcolsep}{1pt}
\caption{Notations}
\label{tab:notations}
\begin{tabular}{lp{16cm}}%
\cline{1-2}
\hline
$\mathcal E^{F}$ & the set of FSO links\\
$\mathcal E^{O}$ & the set of fiber links\\
$\mathcal E^{M}$ & the set of mirror links\\
$\mathcal E$ & the set of all system links ($\mathcal E= \mathcal E^F \cup \mathcal E^O  \cup \mathcal E^M$)\\
$\mathcal E^{N}$ & the set of links with end nodes not in line of sight ($\mathcal E^{N} \subseteq \mathcal E^F$)\\
$\mathcal A^{O}$ & the set of fiber arcs\\
$\mathcal A^{F}$ & the set of FSO arcs\\
$\mathcal A^{M}$ & the set of mirror arcs\\
$\mathcal A$ & the set of system arcs, $\mathcal A= \mathcal A^{F} \cup \mathcal A^{O} \cup \mathcal A^M$\\
$\mathcal V^F$ & the set of FSO nodes \\
$\mathcal V^M$ & the set of mirror nodes \\
$\mathcal V$ & the set of nodes, $\mathcal V= \mathcal V^F \cup \mathcal V^M$\\
$\mathcal D$ & the set of commodities\\
$\mathcal P$ & the set of paths\\
$K$ & the number of link-disjoint paths for each pair of FSO nodes\\
$\kappa_e$ & the reliability of link $e \in \mathcal E$\\
$\delta^+ _{av}$ & =1 if arc $a \in \mathcal A$ is an outgoing arc of node $v \in \mathcal V$ and =0 otherwise.\\
$\delta^- _{av}$ & =1 if arc $a \in \mathcal A$ is an incoming arc of node $v \in \mathcal V$ and =0 otherwise.\\
$\Delta _{ev}$ & =1 if link $e \in \mathcal E$ is connected to node $v \in \mathcal V$ and =0 otherwise.\\
$\Lambda_{ep}$ & =1 if link $e \in \mathcal E$ belongs to path $p$ and =0 otherwise.\\
$s_d$ & the source  of commodity $d \in \mathcal D$\\
$t_d$ & the destination of commodity $d \in \mathcal D$\\
$p_e,q_e$ & the two end nodes of link $e \in \mathcal E$\\
$g_a,h_a$ & the head node and tail node of arc $a \in \mathcal A$\\
$l_a$ & the length of arc $a \in \mathcal A$\\
$L$ & the maximal transmission distance for an optical signal\\
$f$ & a mapping from arcs to links, for any arc $a \in \mathcal A$ corresponding to link $e \in \mathcal E$, we have $f(a)=e$\\
$\rho^+_{v}$ &  the set of outgoing ports for node $v \in \mathcal V$\\
$\rho^-_{v}$ & the set of incoming ports for node $v \in \mathcal V$\\
$\rho_{v}$ & the set of ports for node $v \in \mathcal V$, $\rho = \rho^+_{v} \cup \rho^-_{v}$\\
$\gamma_{av}$ & the index of the port connected to arc $a \in \mathcal A$ for node $v \in \mathcal V$\\
\hline
\end{tabular}
\hfill
\end{table}

\section{Problem description and formulation}\label{sec:formulation}
In this section we formulate the basic problem of this paper. Roughly speaking, the problem consists in constructing a $K$-connected subgraph while optimizing the number of the required FSO transceivers and mirrors for a given (potential) network graph with already installed fiber links. In the formulation we impose an additional requirement that the length of the mirror path is limited by a maximum transmission distance $L$.

The problem is formulated for a two-layer network model where the upper layer consists of the FSO links, fiber links and FSO nodes. The lower layer consists of the mirror links and all nodes, including the mirror nodes. This two-layer model is illustrated by an example in Fig. \ref{fig:layer}.

\begin{figure*}[!htbp]
\centering
\includegraphics[height=1.3in,width=5in]{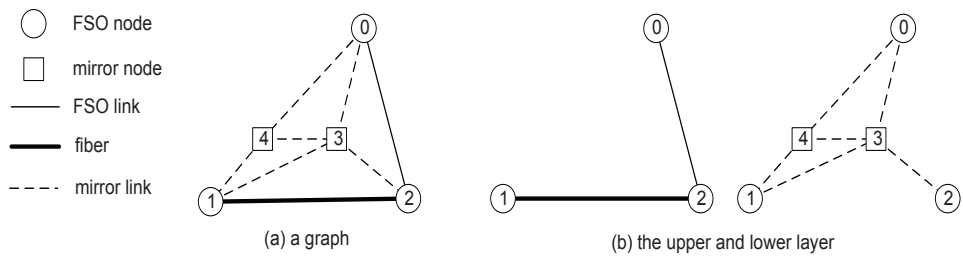}
\caption{An illustration of two layers for a graph}
\label{fig:layer}
\end{figure*}

Before proceeding to the problem formulation, we note that the strength of an optical signal is fixed when the signal is sent from an FSO transceiver (it is transmitted with a fixed power) and this strength is not increased when it is reflected by a mirror. Also, a signal incoming to a mirror cannot be split and forwarded to two different directions, it is just reflected to an outgoing direction. This is illustrated in Fig. \ref{fig:mirror} using two $2$-connected topologies. In Fig. \ref{fig:mirror}(a) nodes $4$ and $5$ have mirrors, which serve different commodity pairs, i.e., the commodity  $(0,3)$ and $(1,3)$. Note that nodes $2$ and $3$ could be connected by an FSO link (according to the Fact 1) if there is no the obstacle.
In Fig. \ref{fig:mirror}(b) path $(0,3,2,3,1)$ connects nodes $0$ and $1$. Path $(0,3,1)$ is not feasible since the total length of link $\{0,3\}$ and link $\{3,1\}$ is excessive. However, path $(0,3,2,3,1)$ is feasible since the signal will arrive firstly to the FSO node $2$ and then will be re-transmitted from with the maximum strength. In fact, this path consists of two feasible mirror paths $(0,3,2)$ and $(2,3,1)$. Further, for the mirror link $\{2,3\}$ two parallel beams should be established and two mirrors are deployed at node $3$ since the optical signal cannot be split by one mirror and forwarded to different directions. Note that case (a) requires 10 FSO transceivers and 2 mirrors while case (b) requires 8 FSO transceivers and 2 mirrors.

\begin{figure}[!hbtp]
  \centering
  \includegraphics[height=1.5in,width=4in]{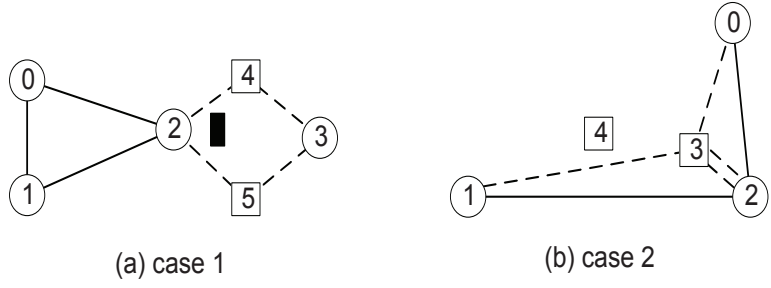}\\
  \caption{Optimal 2-connected topologies with the usage of mirrors}\label{fig:mirror}
\end{figure}

In the formulation each commodity is divided to $K$ flows and the corresponding $K$ link-disjoint paths so that each flow is transmitted over one path. To set up the formulation, we define the following variables.

\begin{tabular}{lp{16cm}}
$x_{a}^{kd}$ & binary variable indicating whether the $k$-th flow  of the commodity $d \in \mathcal D$ passes through the FSO arc $a \in \mathcal A^F$\\
$w_e$ & integer variable indicating the maximum number of flows of a commodity between the end nodes of the FSO link $e \in \mathcal E^F$\\
$u_e$ & binary variable indicating whether the direct beam connection is established for link $e \in \mathcal E$, $u_e=1$ (then the end nodes of $e$ must be in the line of sight), or not, $u_e=0$\\
$R_{e}^{kd}$ & binary variable indicating whether the link $e \in \mathcal E$ is used (or not) for the $k$-th flow of commodity $d$, $R_{e}^{kd}=1$ ($R_{e}^{kd}=0$)\\
$y_e$ &  integer variable expressing the number of mirrors used to connect the end nodes of a FSO link $e \in \mathcal E^F$\\
$r_{e}^{kd}$ &  binary variable indicating whether a mirror path corresponding to the FSO link $e \in \mathcal E^F$, for the $k$-th flow of commodity $d$, should be established, $r_{e}^{kd}=1$, or not, $r_{e}^{kd}=0$ \\
$z_{me}^{kd}$ & binary variable indicating whether the mirror arc $m \in \mathcal A^M$  is used by a mirror path realizing the FSO link $e \in \mathcal E^F$, for the $k$-th flow of the commodity $d$, $z_{me}^{kd}=1$, or not $z_{me}^{kd}=0$\\
$Z_{e'e}^{kd}$ & binary variable indicating whether the mirror link $e'\in \mathcal E^M$ is used by a mirror path realizing FSO link $e \in \mathcal E^F$, for the $k$-th flow of the commodity $d$, $Z_{e'e}^{kd}=1$,  or not, $Z_{e'e}^{kd}=0$ \\
$X_v$ &  binary variable indicating whether the mirror node $v \in \mathcal V^M$ is used for deploying mirrors, $X_v=1$, or not $X_v=0$.
\end{tabular}

The objective function that we minimize is given by
\begin{equation}\label{eq:objective}
c_1 \sum_{v \in V^F} \sum_{e \in  \mathcal E^F}  \Delta_{ev} w_e  + c_2 \sum_{v \in \mathcal V^M} X_v +  c_3  \sum_{e \in \mathcal E^F} y_e  - c_4 \sum_{e \in  \mathcal E \setminus  \mathcal E^O}  \kappa_e u_e.
\end{equation}

This is a multi-objective function and the positive coefficients $c_1,c_2,c_3,c_4$ are the weights of the four terms which are the sub-objectives. In \eqref{eq:objective}, the first term is the number of the FSO transceivers, where $w_e$ denotes the maximum number of flows of a commodity on the FSO link $e$. These flows should use link-disjoint paths, i.e., each flow is allocated either an FSO link $e$ or a mirror path. Since each flow starts from an FSO transceiver and ends at an FSO transceiver, the number of FSO transceivers needed for each flow on link $e$ is $\sum_{v \in V^F} \Delta_{ev} w_e$. Then the total number of  FSO transceivers,  that will be used in the optimal solution,  is $\sum_{v \in V^F} \sum_{e \in  \mathcal E^F}  \Delta_{ev} w_e $.
The  second term denotes the number of used mirror nodes.
The  third term denotes the number of mirrors which is the sum of all mirrors used in the mirror paths of the FSO links, $\sum_{e \in \mathcal E^F} y_e$.
The first four terms are related to the deployment cost. Since the cost of a FSO transceiver is higher than the leasing cost of building roof,  there is also the cost of mirrors, the relation of the costs is $c_1 \gg c_2 \gg c_3$.

The fourth term is the network reliability, which is the sum of the reliability over all links. The reliability of a link, denoted by $\kappa(e)$ and $0 \leq \kappa(e) \leq 1$. To maximize the network reliability, a  minus sign is placed before the term. Our objective is to balance the deployment cost and the system reliability.

To create a $K$-connected pair of FSO nodes, we transmit $K$ flows, with unitary volume for each path, from the source to the destination. This is formulated in \eqref{eq:constraints-0}, which is the flow conservation constraint.

\noindent \textbf{First-layer multicommodity flow constraints:}
\begin{equation}\label{eq:constraints-0}
\mbox{$\sum_{a \in \mathcal A^{F} \cup  \mathcal A^{O} } (\delta^+_{av}  x_{a}^{kd}- \delta^-_{av}  x_{a}^{kd})$}  =
                     \left\{ \begin{array}{l}
                        1, \;  v = s_d\\
                        -1, \;   v = t_d\\
                        0, \;  \mbox{otherwise.} \\
                     \end{array}
             \right.  k \in [1,K],v \in \mathcal V^F, d \in \mathcal D
\end{equation}

The $K$ flows of a commodity have to use $K$ link-disjoint paths. The link-disjoint constraints on  FSO links and  optical fibers are expressed in \eqref{eq:constraints-1}. If an FSO arc $a$ carries the $k$-th flow of commodity $d$, either an FSO link, indicated by $R^{kd}_{f(a)}$, is used or a mirror path, indicated by $r_{f(a)}^{kd}$, is established. At most one flow realizing a given commodity can pass through a given FSO link---this is assured by constraint \eqref{eq:constraints-1-2}. (Note that they can share a fiber link.) Further, constraints \eqref{eq:constraints-1-4} make sure that link $e$ is not established when it does not carry any flows using the direct beam
(this is important from the viewpoint of last term in the objective function). A link should also not be established if there is no line of sight, which is assured by \eqref{eq:constraints-1-5}.

\noindent \textbf{Link-disjoint constraint on FSO links and fiber links}
\begin{subequations}\label{eq:constraints-1}
\begin{align}
&\mbox{$ x_{a}^{kd} \le R^{kd}_{f(a)} + r_{f(a)}^{kd}$},  && a \in \mathcal A^F, d \in \mathcal D, k \in [1,K] \label{eq:constraints-1-1}\\
&\mbox{$ \sum_{k \in [1,K]} R^{kd}_{e} \le u_e$},  && e \in \mathcal E^F, d \in \mathcal D \label{eq:constraints-1-2}\\
&\mbox{$ \sum_{k \in [1,K]} \sum_{d \in \mathcal D} \sum_{a \in f^{-1}(e)} x_{a}^{kd} \ge u_e $} && e \in \mathcal E^F \label{eq:constraints-1-4}\\
&\mbox{$ \sum_{k \in [1,K]} \sum_{d \in \mathcal D} R_{e}^{kd} \ge u_e $} && e \in \mathcal E^F \label{eq:constraints-1-4}\\
& \mbox{$u_e = 0$},  &&  e \in \mathcal E^N. \label{eq:constraints-1-5}
\end{align}
\end{subequations}

Constraints \eqref{eq:constraints-1-5a} express the maximum number of flows belonging to a commodity between end nodes of FSO link $e$ ($w_e$) over all commodities. Since each flow needs a transmitting FSO transceiver and a receiving FSO transceiver, then the number of FSO transceivers needed in the optimal solution can be expressed as in the objective using $w_e$.

\noindent \textbf{Counting the maximum number of flows of a commodity between end nodes of a FSO link}
\begin{equation}\label{eq:constraints-1-5a}
\mbox{$\sum_{a \in f^{-1}(e)} \sum_{k \in [1,K]} x_{a}^{kd} \le w_{e}$}, \quad  e \in \mathcal E^F, d \in \mathcal D.
\end{equation}

In the second layer of the model, if $r_e^{kd} = 1$, then a mirror path that connects the end nodes of a FSO link $e$ for commodity $d$ must be found. This is addressed by the constraint \eqref{eq:constraints-2-1}.
Constraint \eqref{eq:constraints-2-2} assures that a mirror path cannot pass through any FSO nodes, except the origin and the destination. Constraint \eqref{eq:constraints-2-3} expresses that when a mirror arc $m$ is used in a mirror path for FSO link $e$, which carries the $k$-th flow of commodity $d$, then the corresponding link $f(m)$ is used for the $k$-th flow of commodity $d$.

\noindent \textbf{Second-layer multicommodity flow constraints:}
\begin{subequations}\label{eq:constraints-2}
\begin{align}
& \mbox{$\sum_{m \in  \mathcal A^{M}} (\delta^+_{mv} z_{me}^{kd}- \delta^-_{mv} z_{me}^{kd})$} =
                     \left\{ \begin{array}{l}
                        r_{e}^{kd}, \;  v = p_e\\
                        -r_{e}^{kd}, \;   v = q_e\\
                         0, \;  v \in \mathcal V^M \\
                     \end{array}
             \right. \quad k \in [1,K],  e \in \mathcal E^{F}, d \in \mathcal D \label{eq:constraints-2-1}\\
& \mbox{$\sum_{m \in  \mathcal A^{M}} (\delta^+_{mv} + \delta^-_{mv})  z_{me}^{kd}$} =0, \quad  k \in [1,K], \quad  v \in \mathcal V^F \setminus \{p_e,q_e\}, \; e \in \mathcal E^{F}, d \in \mathcal D \label{eq:constraints-2-2}\\
& \mbox{$z_{me}^{kd}$} \leq Z_{f(m)}^{kd}, \quad  k \in [1,K],  m \in  \mathcal A^{M}, \; e \in \mathcal E^{F}, d \in \mathcal D. \label{eq:constraints-2-3}
\end{align}
\end{subequations}

At most one of the flows realizing a given commodity can use a given mirror link---this is assured by \eqref{eq:constraints-3-1} since $u_e$ is binary. This constraint assures also that a mirror link $e$ is established when there are some flows on it. If there are no flows using a mirror link $e$ then it is not---this is assured by \eqref{eq:constraints-3-2}.

\noindent \textbf{Disjoint mirror links constraints:}
\begin{subequations}\label{eq:constraints-3}
\begin{align}
\mbox{$\sum_{k=1}^K Z_{e}^{kd} \leq u_{e}$}, \quad d \in \mathcal D, e \in \mathcal E^M \label{eq:constraints-3-1}\\
\mbox{$\sum_{d \in \mathcal D} \sum_{k=1}^K  \sum_{e' \in \mathcal E} \sum_{m \in f^{-1}(e)}z_{me'}^{kd} \geq u_{e}$}, \quad e \in \mathcal E^M. \label{eq:constraints-3-2}
\end{align}
\end{subequations}

The inequality \eqref{eq:constraints-4} expresses the number of mirrors used to connect end nodes of FSO link $e$. In the left hand side, the term $\sum_{m \in \mathcal A^M} z_{me}^{kd} - r_{e}^{kd}$ denotes the number of mirrors used in a mirror path for FSO link $e$ deployed by the $k$-th flow of a commodity $d$. Then, for a commodity $d$, the number of mirrors needed in order to connect the end nodes of a FSO link $e$ is the summation over the number of mirrors for all the flows. Then, that number is the maximal number of mirrors needed for all the commodities, as described in Proposition \ref{pp:mirrorNum}.

\noindent \textbf{Counting the number of mirrors:}
\begin{equation}\label{eq:constraints-4}
\mbox{$\sum_{k=1}^K (\sum_{m \in \mathcal A^M} z_{me}^{kd} - r_{e}^{kd}) \leq y_{e}$}, \; e \in \mathcal E^{F}, d \in \mathcal D. \\
\end{equation}

\begin{proposition}\label{pp:mirrorNum}
The number of deployed mirrors between the end nodes of a FSO link is the maximum number of mirrors needed for all the commodities.
\end{proposition}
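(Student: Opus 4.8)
The plan is to fix an arbitrary FSO link $e\in\mathcal E^F$ and to show that the number of mirrors one must physically install to realise $e$ is exactly $\max_{d\in\mathcal D}M_d$, where $M_d:=\sum_{k=1}^{K}\bigl(\sum_{m\in\mathcal A^M}z_{me}^{kd}-r_e^{kd}\bigr)$ denotes the left-hand side of \eqref{eq:constraints-4}. Since $y_e$ enters the objective \eqref{eq:objective} only through $c_3\sum_{e\in\mathcal E^F}y_e$ with $c_3>0$ and enters the constraints only through the bounds $M_d\le y_e$ of \eqref{eq:constraints-4}, minimisation then forces $y_e=\max_d M_d$, which is the content of Proposition~\ref{pp:mirrorNum}. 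I would split the argument into a per-commodity count and an aggregation over commodities.

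First I would prove the per-commodity count: realising $e$ for a single commodity $d$ needs exactly $M_d$ mirrors. Fix $k$. If $r_e^{kd}=0$, flow conservation \eqref{eq:constraints-2-1} gives $z_{me}^{kd}=0$ for all $m$, the $k$-th flow of $d$ uses at most the direct beam of $e$, and no mirror is needed, consistently with $\sum_m z_{me}^{kd}-r_e^{kd}=0$. If $r_e^{kd}=1$, constraints \eqref{eq:constraints-2-1}--\eqref{eq:constraints-2-3} make $\{m:z_{me}^{kd}=1\}$ carry a unit $p_e$--$q_e$ flow along mirror arcs avoiding all FSO nodes other than $p_e,q_e$; at the optimum this is a simple path, since minimising $y_e$ suppresses superfluous cycles. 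Because $p_e,q_e\in\mathcal V^F$ while every mirror link is incident to a mirror node, the path uses $\sum_m z_{me}^{kd}\ge2$ mirror links and hence passes through $\sum_m z_{me}^{kd}-1=\sum_m z_{me}^{kd}-r_e^{kd}$ transit mirror nodes, each requiring one mirror by the modelling convention that a mirror path carries its own mirror at every transit node. The mirror paths of the (at most $K$) flows of $d$ that realise $e$ are pairwise link-disjoint by \eqref{eq:constraints-2-3}--\eqref{eq:constraints-3-1} and the integrality of $u$, and by the same convention their mirror sets are disjoint even at shared nodes; summing over $k$ yields $M_d$.

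Second I would aggregate over commodities. The physical realisation of $e$ is one collection of at most $w_e$ pairwise link-disjoint $p_e$--$q_e$ mirror paths (the empty one being the direct beam), shared by all commodities, each commodity routing its flows through $e$ along a sub-collection; hence the installed-mirror count is the same for every commodity and, by the per-commodity count, is at least $M_d$ for each $d$, so at least $\max_d M_d$, which is precisely the bound of \eqref{eq:constraints-4}. For the converse I would show that $\max_d M_d$ mirrors suffice: letting $d'$ be a commodity that uses the largest number $m_{d'}:=\max_d\sum_k r_e^{kd}$ of mirror-path flows through $e$, install the $m_{d'}$ mirror paths it uses together with their $M_{d'}$ mirrors, and reroute every other commodity's mirror-path flows for $e$ onto a sub-collection of these $m_{d'}$ paths---admissible because $m_d\le m_{d'}$ for all $d$ and following an already-installed mirror path adds no mirror. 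This installs $M_{d'}\le\max_d M_d$ mirrors, and at an optimal solution $M_{d'}=\max_d M_d$, for otherwise rerouting the commodity attaining $\max_d M_d$ onto $d'$'s cheaper paths in the same way would strictly reduce $y_e$ and the objective \eqref{eq:objective}, a contradiction. Summing over $e\in\mathcal E^F$ finishes the argument.

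The step I expect to be the main obstacle is the aggregation, and within it the need to reconcile ``the commodity using the most mirror paths'' with ``the commodity needing the most mirrors'': a commodity may require many mirrors through few long mirror paths while another uses many short ones, so $M_d\le y_e$ alone does not exhibit a common realisation---only the optimality/exchange argument does. The remaining ingredients (the transit-node count of a mirror path, link-disjointness from the second-layer and mirror-link constraints, and the passage $y_e=\max_d M_d$ from \eqref{eq:constraints-4} with minimisation) are routine.
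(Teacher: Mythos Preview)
Your proof follows the same two-step structure as the paper's: first the per-commodity count (each mirror path contributes $\sum_{m}z_{me}^{kd}-r_e^{kd}$ transit mirrors, with the case $r_e^{kd}=0\Rightarrow z_{me}^{kd}=0$ handled via \eqref{eq:constraints-2-1}, and the sum over $k$ justified by link-disjointness), then the aggregation across commodities via sharing of mirror paths, with $y_e=\max_d M_d$ forced by minimisation. The paper's proof is a three-sentence sketch that simply asserts the aggregation step, whereas you supply an explicit optimality/exchange argument for why the maximum is tight and even flag the ``most paths vs.\ most mirrors'' subtlety; this goes beyond what the paper actually proves, but the underlying approach is the same.
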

\begin{proof}
If a flow passes through the end nodes of a FSO link $e$ and there is a need for
a mirror path, $r_e^{kd}=1$, then the number of used mirrors is $\sum_{m \in \mathcal A^M} z_{me}^{kd} - r_{e}^{kd}$. Note that if $r_e^{kd}=0$, then $z_{me}^{kd}=0$ which is assured by \eqref{eq:constraints-2-1}.
The number of used mirrors for the commodity on the FSO link $e$ is $\sum_{i=1}^K (\sum_{m \in \mathcal A^M} z_{me}^{kd} - r_{e}^{kd})$, since two or more flows cannot pass through the same mirror link.
Note that a mirror path for a FSO link can be used for different commodities. Thus the number of used mirrors between the end nodes of a FSO link is the maximum number of mirrors needed over all the commodities.
\end{proof}

The length of each mirror path should not be bigger than the maximum transmission distance, as the optical signal attenuates during its transmission along a mirror path. This is assured by inequality \eqref{eq:constraints-5}. When the optical signal arrives at an FSO node, which is not the destination, it will be transmitted with the fixed transmission power.

\noindent \textbf{Limiting the length of a mirror path:}
\begin{equation}\label{eq:constraints-5}
\mbox{$\sum_{m \in \mathcal A^M} l_m z_{me}^{kd} \leq L$}, \quad k \in [1,K], e \in \mathcal E^{F}, d \in \mathcal D.
\end{equation}

If a mirror link is established, the corresponding mirror node should be leased, which is assured by \eqref{eq:constraints-6}.

\noindent \textbf{Mirror node constraint:}
\begin{equation}\label{eq:constraints-6}
u_e \leq X_v, \quad v \in \mathcal V^M, e \in \mathcal E^M, \Delta_{ev} = 1.
\end{equation}

In summary, the mathematical formulation for cost-efficient and resilient backhaul network design utilizing FSO transceivers, mirrors and optical fibers (CRBND) is given below.
\begin{align}
&& \mbox{\textbf{CRBND}} &\nonumber \\
&& \mbox{Minimize} \quad  & \eqref{eq:objective}  \nonumber\\
&& \mbox{s.t.} \quad & \mbox{First-layer multicommodity flow constraints: } \eqref{eq:constraints-0}  \nonumber\\
&& & \mbox{Link-disjoint constraint on FSO links and fiber links:} \eqref{eq:constraints-1}  \nonumber\\
&& & \mbox{Counting the maximum number of flows of a commodity}  \nonumber\\
&& & \mbox{between end nodes of a FSO link:} \eqref{eq:constraints-1-5a}  \nonumber\\
&& & \mbox{Second-layer multicommodity flow constraints:} \eqref{eq:constraints-2}  \nonumber\\
&& & \mbox{Disjoint mirror links constraints: } \eqref{eq:constraints-3}  \nonumber\\
&& & \mbox{Counting the number of mirrors: } \eqref{eq:constraints-4}  \nonumber\\
&& & \mbox{Limiting the length of a mirror path: } \eqref{eq:constraints-5}  \nonumber\\
&& & \mbox{Mirror node constraint: } \eqref{eq:constraints-6}  \nonumber
\end{align}

This network design problem CRBND is $NP$-hard since the special case of this problem with no mirror nodes is an instance of the problem of finding the minimal $K$-edge-connected graph which is known to be $NP$-hard \cite{Garey1980}.

\section{An efficient sequential computation approach}\label{sec:approach}
The time required to solve the CRBND increases dramatically as the size of the network increases. Therefore, for large networks, we need to find an efficient and practical method. In this section, we propose a sequential computation approach, i.e., finding $K$ link-disjoint paths in a manner of commodity-by-commodity. The nodes and the links that are used in the optimal solution of  a commodity will not be counted in the objective function for calculating all subsequent commodities.

The steps for this approach are given below.

\textbf{Step 1}. The commodities are sorted in the descending order based on the distance of the source and the destination. Thus, the commodity with the longest distance between the source and the destination is disposed first. The reason is that the commodity with the longer distance is inclined to uses more FSO transceivers and mirrors, which will be reused by subsequent commodities to save the total cost.

\textbf{Step 2}. For each commodity, we need to establish $K$ link-disjoint paths. We develop a new non-compact formulation, called the link-path formulation. This formulation is based on the notation used for paths, which is shown in \eqref{model:oneDemand}. We denote the set of available paths for the considered commodity as $\mathcal P$, define the constant $\Lambda_{ep}$ to represent whether link $e \in \mathcal E$ belongs to the path $p \in \mathcal P$, $\Lambda_{ep}=1$, or not $\Lambda_{ep}=0$, and define the constant $\Theta_{ap}$ to represent whether arc $a \in \mathcal A$ belongs to the path $p \in \mathcal P$, $\Theta_{ap}=1$, or not $\Theta_{ap}=0$. $\Lambda_{ep}$ and $\Theta_{ap}$ can be computed by solving the pricing problem which will be introduced later.

The introduced variables are defined below.

\begin{tabular}{lp{16cm}}
$X_p$ &  binary variable, $X_p=1$ if path $p \in \mathcal P$ is used; $X_p=0$ otherwise\\
$U_e$ &  binary variable, $U_e=1$ if link $e \in \mathcal E$ is established; $U_e=0$ otherwise\\
$Y_v$ &  binary variable, $Y_v=1$ if mirror node $v \in \mathcal V^M$ is used; $Y_v=0$ otherwise\\
\end{tabular}

The link-path formulation is given as below.
\begin{subequations}\label{model:oneDemand}
\begin{align}
\mbox{Min} \;\; &  \mbox{$ c_1 \sum_{v \in \mathcal V^F}  \sum_{a \in  \mathcal A \setminus \mathcal A^O} \sum_{p \in \mathcal P}  \delta_{av} \Theta_{ap} X_{p}  + c_2 \sum_{v \in \mathcal V^M} Y_v $} \nonumber \\
&  \mbox{$ + \frac{c_3}{2} \sum_{a \in \mathcal A^M}\sum_{v \in \mathcal V^M} \sum_{p \in \mathcal P}  \delta_{av} \Theta_{ap}  X_p  - c_4 \sum_{e \in  \mathcal E \setminus \mathcal E^O} \kappa_{e} U_e $} \label{eq:oneDemand-1}\\
& \mbox{$\sum_{p \in \mathcal P} X_{p} =K$} \label{eq:oneDemand-2}\\
& \mbox{$\sum_{p \in \mathcal P} \Lambda_{ep} X_{p} = U_{e}$}, \; e \in \mathcal E^F \cup \mathcal E^M  \label{eq:oneDemand-3}\\
& \Delta_{ev} U_e \leq Y_v, \; v \in \mathcal V^M, e \in \mathcal E^M  \label{eq:oneDemand-5}
\end{align}
\end{subequations}

The four terms in the objective function \eqref{eq:oneDemand-1} denote the number of the FSO transceivers,  the number of used mirror nodes, the number of mirrors and the network reliability. In this objective, the first term and the third term are different from their counterparts in \eqref{eq:objective}, however the equivalence is proved in Proposition \ref{pp:oneDemand}.
The constraints \eqref{eq:oneDemand-2} assure that $K$ paths are selected and they are link-disjoint, i.e., two paths cannot share an FSO link or a mirror link, which is assured by constraint \eqref{eq:oneDemand-3}. Note that an optical fiber can be shared by different paths.
Constraints \eqref{eq:oneDemand-3} also assure that an FSO link or a mirror link is deployed if and only if there is a flow on it. Finally, constraints \eqref{eq:oneDemand-5} force that a mirror node should be leased when there is a deployed link attached to it.

\begin{proposition}\label{pp:oneDemand}
In the model \eqref{model:oneDemand}, the number of FSO transceivers is $\sum_{v \in \mathcal V^F}  \sum_{a \in  \mathcal A \setminus \mathcal A^O} \sum_{p \in \mathcal P}  \delta_{av} \Theta_{ap} X_{p} $ and the number of mirrors is $\frac{1}{2} \sum_{a \in \mathcal A^M}\sum_{v \in \mathcal V^M} \sum_{p \in \mathcal P}  \delta_{av} \Theta_{ap} X_p$. 
\end{proposition}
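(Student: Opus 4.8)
The plan is to prove the two counting identities separately, and in each case to reduce the claim to a single path. Since every $X_p$ is binary, both right-hand sides are sums over the selected paths $S=\{p\in\mathcal P: X_p=1\}$, and by \eqref{eq:oneDemand-2}--\eqref{eq:oneDemand-3} (with $U_e$ binary, so $\sum_p\Lambda_{ep}X_p\le 1$) these $K$ paths have no FSO link and no mirror link in common. Hence the FSO transceivers and the mirrors required by distinct selected paths sit on disjoint sets of FSO/mirror links and are never shared, so it is enough to show that for one admissible path $p$ the quantity $\sum_{v\in\mathcal V^F}\sum_{a\in\mathcal A\setminus\mathcal A^O}\delta_{av}\Theta_{ap}$ equals the number of FSO transceivers needed to realize $p$, and $\tfrac12\sum_{a\in\mathcal A^M}\sum_{v\in\mathcal V^M}\delta_{av}\Theta_{ap}$ equals the number of mirrors needed to realize $p$; throughout, $\delta_{av}=\delta^+_{av}+\delta^-_{av}$ is the incidence indicator of arc $a$ and node $v$.

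For the transceiver identity I would regard $p$ as a walk from $s_d$ to $t_d$ and decompose it into maximal sub-walks of mirror arcs (the ``mirror paths''), separated by fiber arcs and by single FSO arcs (direct beams). By the physics recalled around Fig.~\ref{fig:mirror}, a signal is regenerated at every FSO node it reaches: a direct FSO beam needs one transceiver at each of its two FSO end nodes, a mirror path needs one transceiver at each of its two FSO end nodes (which, cf.\ \eqref{eq:constraints-2-2}, are the only FSO nodes it meets), and a fiber arc needs none. Consequently the transceivers of $p$ are in one-to-one correspondence with the pairs $(v,a)$ such that $v\in\mathcal V^F$ and $a\in\mathcal A^F\cup\mathcal A^M$ is an arc of $p$ incident to $v$: a direct-beam arc supplies the pair at each of its two endpoints, a mirror path supplies the pair at its first and its last arc, and interior mirror arcs have no FSO endpoint and contribute nothing. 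Counting these pairs is exactly $\sum_{v\in\mathcal V^F}\sum_{a\in\mathcal A^F\cup\mathcal A^M}\delta_{av}\Theta_{ap}$, which is the first identity for one path; summing over $S$ gives the objective term.

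For the mirror identity I would look at one maximal mirror sub-walk $u=m_0,a_1,m_1,\dots,a_j,m_j=w$ with $u,w\in\mathcal V^F$ and $m_1,\dots,m_{j-1}\in\mathcal V^M$; here $j\ge 2$, since a mirror link always has a mirror end node and therefore cannot join two FSO nodes directly. Because a mirror cannot split a beam (Fig.~\ref{fig:mirror}), this sub-walk needs exactly one mirror per interior node, i.e.\ $j-1$ mirrors. On the other hand its two end arcs have a single mirror endpoint each and its $j-2$ interior arcs have two, so the contribution of this sub-walk to $\sum_{v\in\mathcal V^M}\sum_{a}\delta_{av}\Theta_{ap}$ is $2(j-1)$, and the factor $\tfrac12$ restores $j-1$. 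Summing over all mirror sub-walks of $p$ and then over $S$ gives the mirror term; the same bookkeeping shows that a mirror node traversed twice --- by a single walk, as in Fig.~\ref{fig:mirror}(b), or by two link-disjoint selected paths --- is charged one mirror per traversal, exactly as the no-splitting rule demands.

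The part I expect to be most delicate is the one-to-one correspondence at FSO nodes that act as junctions: an FSO node lying between two mirror paths, or between a mirror path and a direct beam, or visited more than once by the same walk. There one must check that each incident arc of $\mathcal A^F\cup\mathcal A^M$ is charged its own transceiver and that no transceiver is credited to two arcs; making this precise is also where one sees that $\delta_{av}$ must be read as the undirected incidence $\delta^+_{av}+\delta^-_{av}$ rather than $\delta^+_{av}$, since the latter would undercount the transceivers by a factor of two.
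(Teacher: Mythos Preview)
Your proposal is correct and follows essentially the same approach as the paper: reduce to a single path via the link-disjointness constraints, then count FSO transceivers as the number of incidences $(v,a)$ with $v\in\mathcal V^F$ and $a$ a non-fiber arc of $p$, and mirrors as half the number of incidences $(v,a)$ with $v\in\mathcal V^M$ and $a$ a mirror arc of $p$. Your write-up is considerably more careful than the paper's --- in particular your decomposition into maximal mirror sub-walks, the explicit $2(j-1)$ count, and the treatment of repeated visits as in Fig.~\ref{fig:mirror}(b) fill in justifications the paper leaves implicit --- but the underlying argument is the same.
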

\begin{proof}
A path in the optimal solution of \eqref{model:oneDemand} starts from an FSO node and terminates at an FSO node. The intermediate node can be a mirror node or an FSO node. Note that two oppositely directed arcs corresponding to a same link may appear in a path, e.g., path $(0,3,2,3,1)$ in Fig. \ref{fig:mirror}(b). Thus, the total number of FSO transceivers for path $p$ can be counted in this way: for each FSO node, count the number of attached arcs included in path $p$, and then sum up over all FSO nodes, which is expressed  as $ \sum_{v \in \mathcal V^F}  \sum_{a \in  \mathcal A \setminus \mathcal A^O} \delta_{av} \Theta_{ap} X_{p} $. The number of mirrors in path $p$ is counted in a similar way: for each mirror node, count the number of attached mirror arcs included in path $p$, sum up over all mirror nodes and divide by 2 (a mirror corresponds two mirror links), which is expressed as $\sum_{v \in \mathcal V^M}  \sum_{a \in \mathcal A^M} \delta_{av} \Theta_{ap} X_p$. The paths for one commodity are link-disjoint, meaning that each arc $a \in \mathcal A$ will not be used in different paths. Therefore, the total number of FSO transceivers (mirrors) can be obtained by summing up over all paths.
\end{proof}

The optimization problem in \eqref{model:oneDemand} cannot be solved directly, since we cannot pre-define a set of paths where the optimal paths are included. A practical method is to solve its linear relaxation by path generation, i.e., iteratively generating paths until the optimum is achieved \cite{pioro2004}.

Regarding the linear relaxation of \eqref{model:oneDemand}, which is called master problem, let $\lambda$, $\pi_{e}, e \in \mathcal E$ denote the dual variables for constraints \eqref{eq:oneDemand-2} and \eqref{eq:oneDemand-3} respectively. Given an initial set of paths $\mathcal P^*$, we solve the master problem and obtain the optimal dual variables $\lambda^*, \pi_{e}^*, e \in \mathcal E$.
In the given graph, if we can find a new path $p$, such that $\sum_{e \in \mathcal E} \Lambda_{ep} \pi^*_{e} \leq \lambda^*$, i.e., the path has shorter length than $\lambda^*$, with $\pi^*_{e}, e \in \mathcal E$ denoting the weights of the arcs. Then, this path is added to the current set of paths and the master problem is solved again to obtain the new optimal dual values. This procedure is repeated until no paths can be generated, and it is called path generation, see \cite{pioro2004}.

The problem of finding a path, is a pricing problem, in our case is equivalent to the generation of a shortest path with respect to $\pi^*_{e}, e\in \mathcal E$, which satisfies the length constraint for any inside mirror path. For example, consider the path $(0,3,2,3,1)$ in Fig. \ref{fig:mirror}(b), the length of the mirror path $(0,3,2)$ and the mirror path $(2,3,1)$ has to be less than $L$.

To model the pricing problem, we define a set of ports for every node. Each port is connected to an arc attached to the node. The set of index of the ports for node $v$ is represented by $\rho_v$, the set of outgoing ports is represented by $\rho^+_v$ and the set of incoming ports is represented by $\rho^-_v$. Let $\gamma_{av}$, where $ a \in \mathcal A$ and $v \in \mathcal V$, be the index of the port in node $v$ connecting arc $a$. The pricing problem is formulated in \eqref{model:pricing}, while the variables are defined as follows.

\begin{tabular}{lp{16cm}}
$x_a$ & binary variable, $x_a=1$ if arc $a$ is used; $x_a=0$ otherwise\\
$u_e$ & binary variable, $u_e=1$ if link $e$ is used; $u_e=0$ otherwise\\
$y^+_{vi}$ & continuous variable, representing the transmitted distance when the signal is sent out from outgoing port $i \in \rho^+_v$ of node $v \in \mathcal V$\\
$y^-_{vi}$ & continuous variable, representing the transmitted distance when the signal arrives at incoming port $i \in \rho^-_v$ of node $v \in \mathcal V$\\
$z_{vij}$ & binary variable, $z_{vij}=1$ if the signal in incoming port $j \in \mathcal \rho^-_{v}$ is forwarded to outgoing port $i \in \mathcal \rho^+_{v}$ for a node $v \in \mathcal V$.\\
\end{tabular}

The objective \eqref{eq:prcing-1}, is to minimize the total weighted links, i.e., finding a shortest path. Constraints \eqref{eq:prcing-2} are the flow conservation rule, which is used to find a path that connects the source $s$ and the destination $t$ of a commodity.

When an optical signal is transmitted by an FSO node, the transmitted distance is set to be $0$, this is stated by the constraint \eqref{eq:prcing-3}.
When an optical signal is transmitted along an arc $a$, from the outgoing port $\gamma_{a g_a}$ of node $g_a$ to the incoming port $\gamma_{a h_a}$ of node $h_a$, the length of the arc, $l_a$, is accumulated. This is expressed by constraints \eqref{eq:prcing-4}.

When an optical signal arrives at a mirror node $v \in \mathcal V^M$, it will be forwarded from an incoming port to an outgoing port without changing the transmitted distance. This is assured by constraints \eqref{eq:prcing-5} and \eqref{eq:prcing-6}. These constraints are equivalent to $|y^+_{vi} - y^-_{vj}| \leq L (1-z_{vij})$. If $z_{vij} =1$, i.e., the optical signal arrives at the incoming port $j$  and is forwarded to the outgoing port $i$ of node $v$, then we have $y^+_{vi} = y^-_{vj}$. Otherwise, i.e., $z_{vij} =0$, the inequalities hold trivially.
Note that the maximum transmitted distance for an optical signal arriving at a node should be smaller than $L$, which is assured by constraints \eqref{eq:prcing-12}. In this way, the constraint for each mirror path is indicated.

Further, constraints \eqref{eq:prcing-7} ensure that, for a mirror node $v$, if an outgoing arc $a$ is not used, any signal from incoming port $i$ should not be forwarded to outgoing port $\gamma_{ag_a}$ (the port of node $g_q$ connecting arc $a$). Similarly, constraints \eqref{eq:prcing-8} guarantee that for a mirror node $v$, if an incoming arc $a$ is not used, then the signal in the corresponding incoming port $\gamma_{ah_a}$ should not be forwarded to any outgoing port $i$. Constraints \eqref{eq:prcing-9} and \eqref{eq:prcing-10} guarantee that signals in an incoming port can only be forwarded to one  outgoing port and vice visa.  Finally, constraints \eqref{eq:prcing-11} assure that a link is used when any corresponding arc is used.

After solving the price problem, we can obtain $\Lambda_{ep}$ and $\Theta_{ap}$ for current path $p$, based on the optimal value of $x_a$ and $u_e$ respectively.
\begin{subequations}\label{model:pricing}
\begin{align}
&\mbox{ minimize  $\sum_{e \in \mathcal E} \pi^*_e u_{e}$} \label{eq:prcing-1}\\
& \mbox{$\sum_{a \in \mathcal A}\delta^{+}_{av} x_{a} - \sum_{a \in \mathcal A } \delta^{-}_{av} x_{a}$} = \left \{
\begin{aligned}
& 1, \;  v=s\\
& -1, \; v=t\\
& 0, \; \mbox{otherwise}
\end{aligned}
\right.  \label{eq:prcing-2}\\
& y^+_{vi} = 0,  \; i \in \rho^+_{v}, v \in \mathcal V^F  \label{eq:prcing-3}\\
& \mbox{$y^-_{h_ai} =  y^+_{g_aj} + l_a x_a$}, \; i = \gamma_{a h_a}, j = \gamma_{a g_a}, a \in \mathcal A  \label{eq:prcing-4}\\
& \mbox{$y^+_{vi} - y^-_{vj} \leq L (1-z_{vij})$}, \; i \in \rho^+_{v}, j \in \rho^-_{v},v \in \mathcal V^M  \label{eq:prcing-5}\\
& \mbox{$y^+_{vi} - y^-_{vj} \geq L (z_{vij}-1)$}, \; i \in \rho^+_{v}, j \in \rho^-_{v},v \in \mathcal V^M  \label{eq:prcing-6}\\
& z_{v\gamma_{av}i} \leq x_a, v = g_a, \; a \in \mathcal A, i \in \rho^-_{v} \label{eq:prcing-7}\\
& z_{vi\gamma_{av}} \leq x_a, v = h_a, \; a \in \mathcal A, i \in \rho^+_{v} \label{eq:prcing-8}\\
& \mbox{$\sum_{i \in \rho^+_{v}} z_{vij} = 1$}, \; j \in \rho^-_{v},v \in \mathcal V^M  \label{eq:prcing-9}\\
& \mbox{$\sum_{j \in \rho^-_{v}} z_{vij} = 1$}, \; i \in \rho^+_{v},v \in \mathcal V^M  \label{eq:prcing-10}\\
& x_{a} \leq u_{f(a)}, \; a \in \mathcal A \label{eq:prcing-11}\\
& 0 \leq y^+_{vi} \leq L, \; 0 \leq y^-_{vj} \leq L, \; i \in \rho^+_{v}, j \in \rho^-_{v}, v \in \mathcal V  \label{eq:prcing-12}
\end{align}
\end{subequations}

In order to solve the master problem, an initial feasible set of paths must be provided. We introduce the auxiliary variable $Z$ and then solving the \eqref{model:initialPaths} by column generation. During the path generation, we can stop the procedure before the optimum is reached when $Z=0$. If in the optimum $Z \neq 0$,  then, there are not $K$ link-disjoint paths for the considered commodity. 
\begin{subequations}\label{model:initialPaths}
\begin{align}
\mbox{Min} \;\; & Z \label{eq:initialPaths-1}\\
& \mbox{$\sum_{p \in \mathcal P} X_{p} =K + Z$} \label{eq:initialPaths-2}\\
& \mbox{$\sum_{p \in \mathcal P} \Lambda_{ep} X_{p} \leq 1$}, \; e \in \mathcal E^F \cup \mathcal E^M  \label{eq:initialPaths-3}\\
& X,Y,Z \mbox{-- continuous} \nonumber
\end{align}
\end{subequations}

\textbf{Step 3}. When the linear relaxation of the model \eqref{model:oneDemand} is solved, a feasibility check should be carried out to assure that there are $K$ link-disjoint paths. A easy way is solving the equivalent binary problem of \eqref{model:initialPaths}. If the model is infeasible, then we have to solve the CRBND for the current commodity in order to obtain feasible paths.

After computing paths for current commodity, we obtain the deployed links $\{e \in \mathcal A: U_e =1\}$, mirrors nodes $\{v \in \mathcal V^M: Y_v =1 \}$ and the set of paths $\mathcal P_d$. For calculating paths of subsequent commodities, the deployed links and nodes will not be included in the cost.

To accelerate the path generation for subsequent commodities, we break down the obtained paths of the current commodity to compute some paths which can be used for subsequent commodities.
For current commodity $d$, let $\xi^p_{d}$ be a set of nodes corresponding to path $p \in \mathcal P_d$, $\xi^p_{dn}$ will be the $n$-th node in the path $p$, and $\xi_d =\{\xi^p_{d}, p \in \mathcal P_d\}$

The procedure of breaking down the paths is shown by the Algorithm \ref{alg:breakDown}.

\begin{algorithm}
\caption{Breaking down paths for commodity $d$}
\label{alg:breakDown}
\begin{algorithmic}[1]
\REQUIRE $\mathcal P_d$
	\FORALL{$p \in \mathcal P_d$}
        \FOR{$m = 1,2,.. , |\xi^p_d|-1$}
            \FOR{$n = m+1,.. , |\xi^p_d|$}
                \IF{$\xi^p_{dm},\xi^p_{dn} \in \mathcal V^F$}
                    \STATE $d'$ is the commodity corresponding to $(\xi^p_{dm},\xi^p_{dn})$
                    \IF {$\{\xi^p_{dm},\xi^p_{dm+1},...,  \xi^p_{dn}\} \not\in \mathcal \xi_{d'}$}
                        \STATE  $\xi^{|\mathcal P_{d'}|+1}_{d'} = \{\xi^p_{dm},\xi^p_{dm+1},...,  \xi^p_{dn}\}$
                        \STATE $\mathcal P_{d'} \leftarrow |\mathcal P_{d'}| +1$
                    \ENDIF
                \ENDIF
            \ENDFOR
        \ENDFOR
    \ENDFOR
\end{algorithmic}
\end{algorithm}

\textbf{Step 4}. When we obtain paths for all commodities, we consider all paths jointly for optimizing the deployment cost and the network reliability. Using all obtained paths, we solve a mixed integer programming model, which is obtained from model \eqref{model:oneDemand} by using $X_{dp}$ instead of replacing $X_{p}$. $X_{dp}$ are binary variables indicating if the path $p$ is deployed by commodity $d$. The constant $\Lambda_{ep}$ is also replaced by $\Lambda_{edp}$, representing whether link $e$ is included in path $p$ of commodity $d$.
Constraints \eqref{eq:oneDemand-3} are changed to $\mbox{$\sum_{p \in \mathcal P} \Lambda_{edp} X_{dp} \leq U_{e}, \; e \in \mathcal E^F \cup \mathcal E^M, d \in \mathcal D$}$ and adding constraints $\mbox{$\sum_{d \in \mathcal D} \sum_{p \in \mathcal P} \Lambda_{edp} X_{dp} \geq U_{e}, \; e \in \mathcal E^F \cup \mathcal E^M$}$ to ensure that link $e$ will not be established when there are no flows on it.

In summary, the sequential computation approach is presented in Algorithm \ref{alg:sequential}, where $FL$ is the set of deployed links, $FM$ is the set of used mirror nodes, and $\eta$ represents a newly generated path.

\begin{algorithm}
\caption{The sequential computation approach}
\label{alg:sequential}
\begin{algorithmic}[1]
\REQUIRE $\mathcal G(\mathcal V, \mathcal E)$
\STATE Sort $d \in \mathcal D$ in descending order with respect to distance of the source and the destination.
\STATE $\mathcal P_d = \{\}, \xi_d =\{\}, d \in \mathcal D$, $FL=\{\}, FM=\{\}$
	\FORALL{$d \in \mathcal D$}
       \WHILE {True}
         \STATE $\lambda^*, \pi_e^*, Z^* =$ solve the model \eqref{model:initialPaths} ($\mathcal P_d$)
         \IF {model \eqref{model:initialPaths} is infeasible}
             \STATE return \{\}; There are no feasible $K$ link-disjoint paths for current commodity.
         \ENDIF
         \STATE $\eta=$solve the model \eqref{model:pricing} $(\pi_e^*, e \in \mathcal E)$
         \STATE $p'=|\mathcal P_d|+1$ and compute $\Delta_{ep'}, e \in \mathcal E$ and $\Theta_{ap'}, a \in \mathcal A$ according to $\eta$
         \IF {$\sum_{e \in \eta} \pi^*_e < \lambda^*$ and $Z^*>0$}
             \STATE $\mathcal P_d \leftarrow p'$
             \STATE $\xi_d \leftarrow \eta$
         \ELSE
             \STATE break
         \ENDIF
       \ENDWHILE
       \WHILE {True}
         \STATE $U_e^*, Y_v^*,\lambda^*, \pi_e^*=$ solve the model \eqref{model:oneDemand} $(P_d,FL,FM)$
         \STATE Repeat 9-12, but remove $Z^*>0$ in 11.
       \ENDWHILE
       \STATE Feasibility check: $Z^*=$ solve model \eqref{model:initialPaths} ($\mathcal P^*_d$) with variables set to be binary
       \IF { $Z^* >0$}
       \STATE Solve the model CRBND for $d$ to obtain new feasible $\mathcal P_d$
       \IF { the model CRBND is still infeasible}
       \STATE return \{\}; There are no feasible $K$ link-disjoint paths for current commodity.
       \ENDIF
       \ENDIF
       \STATE $FL \leftarrow \{e \in \mathcal E: U_e^* =1 \}$, $FM \leftarrow \{v \in \mathcal V^M: Y_v^* =1 \}$
       \STATE Breaking down $\mathcal P_d$ by Algorithm 1
    \ENDFOR
    \RETURN $\mathcal (P_d^*, d \in \mathcal D$) = solve model \eqref{model:oneDemand} with variables $X_{dp}$($\mathcal P_d, d \in \mathcal D$)
\end{algorithmic}
\end{algorithm}

The Algorithm \ref{alg:sequential} computes $K$ link-disjoint routing paths for each commodity. However, the number of FSO transceivers and mirrors is not explicitly given. It is not correct to count them on each path and then sum them all together. Since some FSO transceivers and mirrors used for one commodity can be reused for other commodities.
Thus, we propose the Algorithm \ref{alg:deploy} to count FSO transceivers and mirrors based on obtained paths. Let $F$ and $M$ denote the number of FSO transceivers and the number of mirrors respectively.

For each path of a commodity, following the order of links in the path, we take a mirror path, count FSO transceivers and mirrors and then select the next one. Note that each mirror path has its own FSO transceivers and mirrors.
To avoid duplicated counting FSO transceivers and mirrors for different commodities, each mirror path is counted only  once.

\begin{algorithm}
\caption{Counting FSO transceivers and Mirrors}
\label{alg:deploy}
\begin{algorithmic}[1]
\REQUIRE $\mathcal P_d, \xi_d, d \in \mathcal D$
    \STATE F=0,M=0
    \FORALL{$d \in \mathcal D$}
	  \FORALL{$p \in \mathcal P_d$}
          \FOR{$m = 1,2,.. , |\xi^p_d|-1$}
              \FOR{$n = m+1,.. , |\xi^p_d|$}
                  \IF{$\xi^p_{dm},\xi^p_{dn} \in \mathcal V^F$}
                      \IF{the  mirror path $\{\xi^p_{dm},\xi^p_{d(m+1)}, ...\xi^p_{dn}\}$ has not been disposed}
                          \STATE F=F+1,M=M+n-m
                          \STATE m = n
                      \ENDIF
                  \ENDIF
              \ENDFOR
          \ENDFOR
      \ENDFOR
     \ENDFOR
\RETURN F,M
\end{algorithmic}
\end{algorithm}

\section{Numerical Results} \label{sec:numerical}
In this section we provide numerical results regarding optimal topologies and optimal deployment of network devices under certain realistic scenarios. The impact of different parameters of the design of the network is analyzed, furthermore, the efficiency of the proposed algorithm is compared with the exact model.

The model and the proposed algorithms are evaluated through experiments based on both realistic and randomly generated data. We consider the following values for the parameters in all of the experiments. We set $\Gamma_{th} = 0.88$, $I_{th}/I_0 = 0.8$,  $\lambda = 1550$ nm which is a commonly used wavelength in optical communications, and $C^2_n = 10^{-15}$ $m^{-2/3}$ expressing a moderate atmospheric turbulence.

An FSO link or a mirror link can be established if and only if $\Gamma(l_e) \geq \Gamma_{th}$. Then, the maximum transmission distance, $L=1400$ m, is obtained. The objective function can be obtained in two steps. The first step is the cost including he number of FSO transceivers, the number of leased mirror nodes and the number of mirrors. We set the weights as $c_1 : c_2 :c_3 = 4:2:1$. The second step is the total reliability of FSO links and mirror links, the coefficient for the reliability is $c_4$. The trade-off between the two type of objectives will be analyzed in the remaining of this section.

The Gurobi solver \cite{Gurobi} is used for the integer and linear programming problems. All the computations were executed on a Windows XP computer equipped with a dual core Intel $2.53$ GHZ CPU and $1.93$GB RAM.

The realistic data is a snapshot of the district around Alexanderplatz of Berlin, and includes $11$ base stations in an area of $3000 \times 3000 m^2$. The realistic data are provided by the EU MOMENTUM project \cite{MOMENTUM}. Fig. \ref{fig:scenarios} is a map where the red circles are base stations, the blue squares are the candidate locations, building roofs, for deploying mirrors. The black unnumbered squares represent high buildings which are treated as obstacles.

Regarding the randomly generated data, we generate the same number of FSO nodes and mirror nodes in a square area of $3000 \times 3000 m^2$ randomly. The set of links consists of any pair of nodes whose distance is smaller than $L$.

\begin{figure}
        \centering
        \begin{subfigure}[b]{0.25\textwidth}
                \centering
                \includegraphics[trim = 1.2in 0.3in 1.2in 0.3in, clip, width=\textwidth]{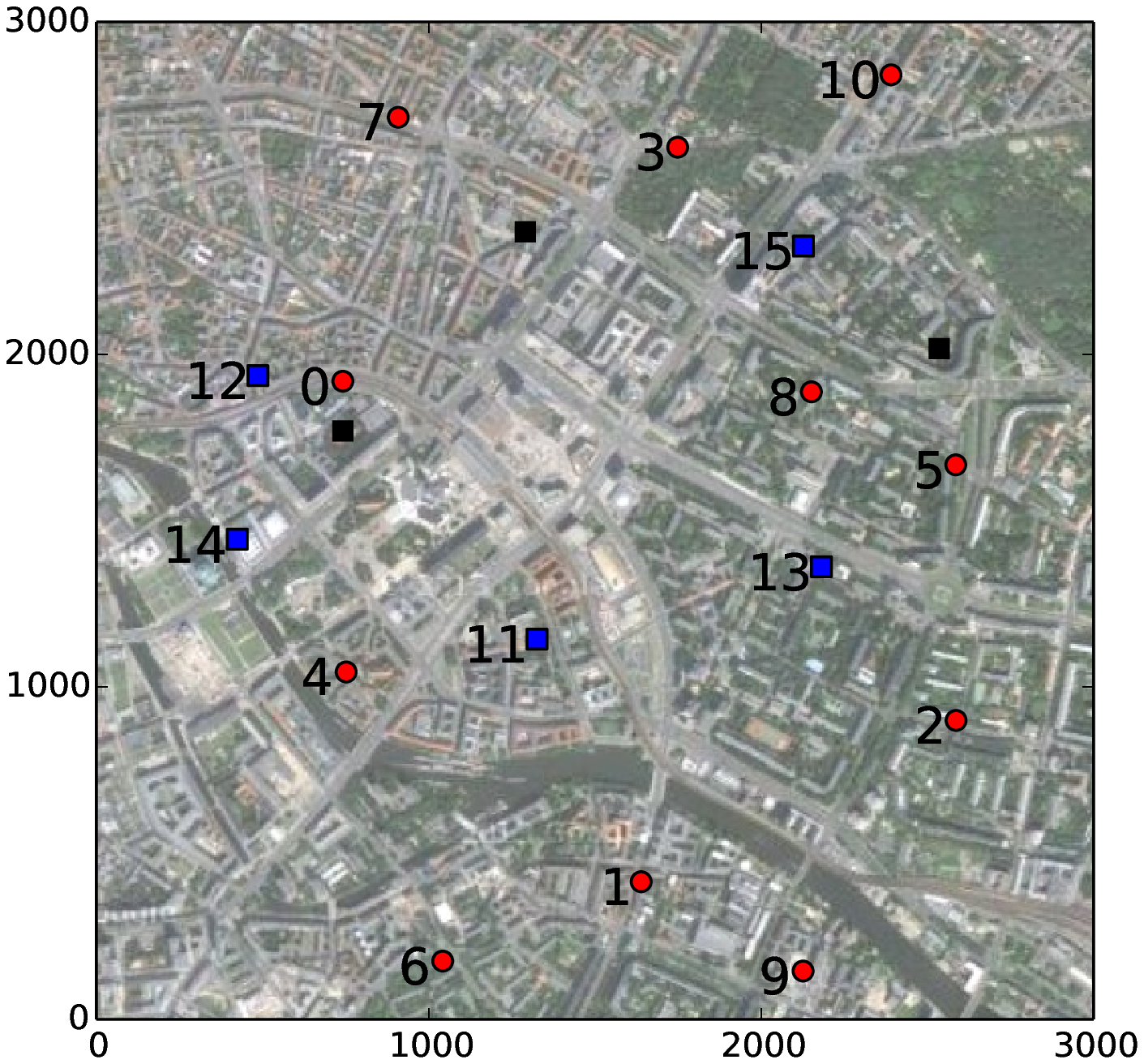}
                \caption{R1}
                \label{fig:berlin1}
        \end{subfigure}
        \begin{subfigure}[b]{0.25\textwidth}
                \centering
                \includegraphics[trim = 1.2in 0.3in 1.2in 0.3in, clip, width=\textwidth]{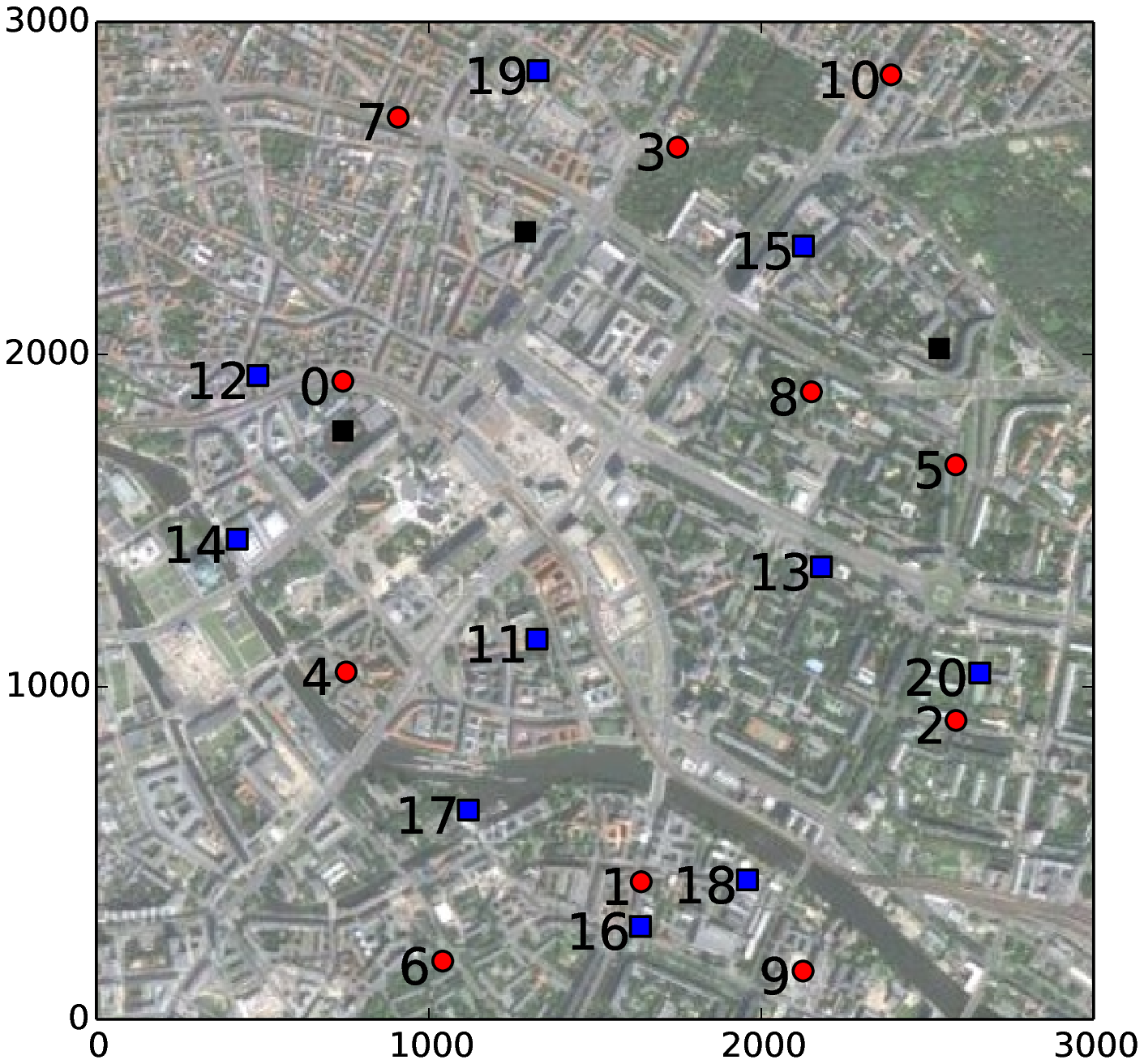}
                \caption{R2}
                \label{fig:berlin2}
        \end{subfigure}
        \caption{Two scenarios of base stations in Berlin}\label{fig:scenarios}
\end{figure}
\begin{figure*}[!hbtp]
        \centering
        \begin{subfigure}[b]{0.3\textwidth}
                \centering
                \includegraphics[trim = 1.2in 0.3in 1.2in 0.3in, clip, width=2in]{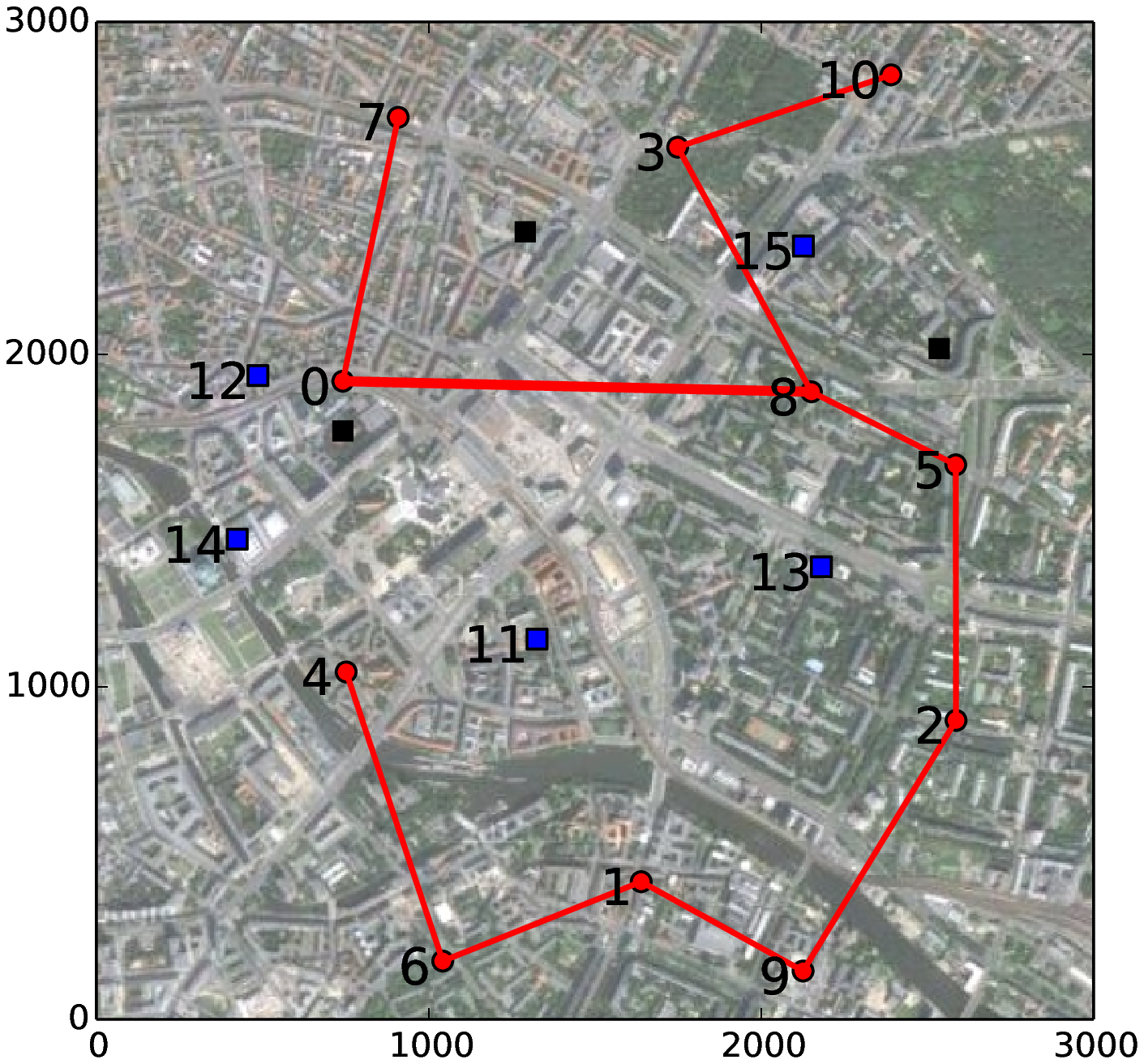}
                \caption{1-connected}
                \label{fig:1-connected}
        \end{subfigure}
        \begin{subfigure}[b]{0.3\textwidth}
                \centering
                \includegraphics[trim = 1.2in 0.3in 1.2in 0.3in, clip, width=2in]{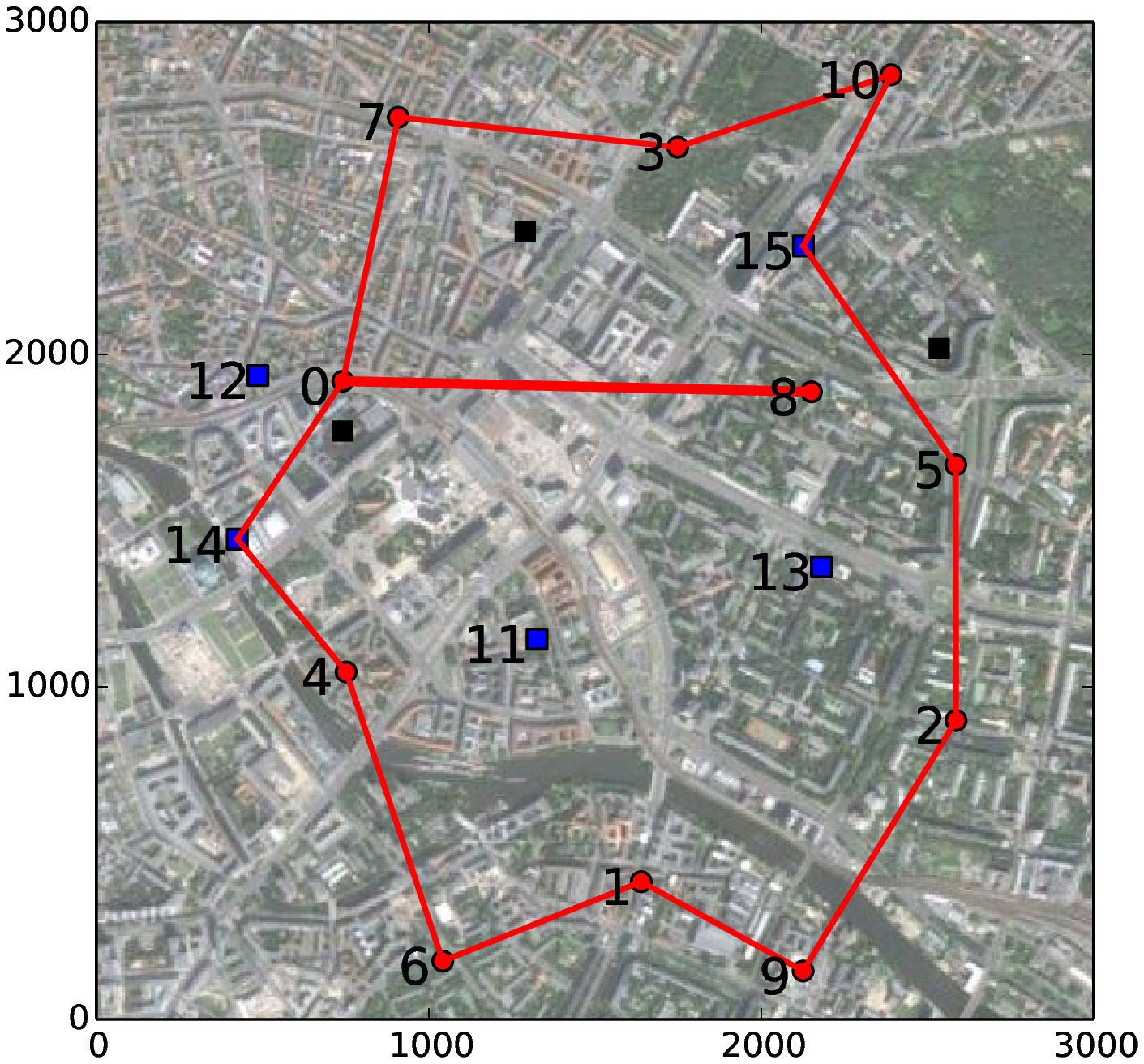}
                \caption{2-connected}
                \label{fig:2-connected}
        \end{subfigure}
        \begin{subfigure}[b]{0.3\textwidth}
                \centering
                \includegraphics[trim = 1.2in 0.3in 1.2in 0.3in, clip, width=2in]{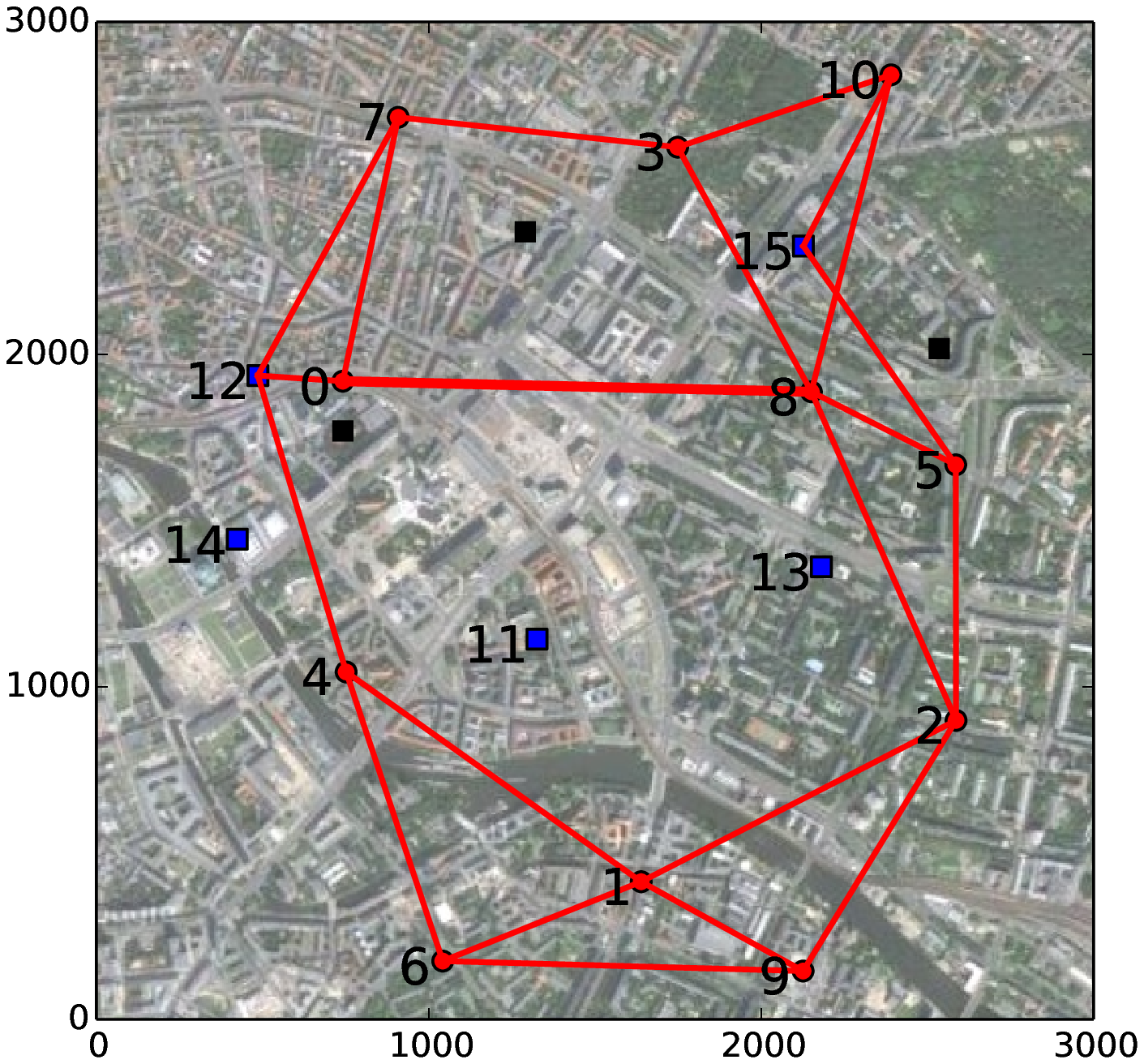}
                \caption{3-connected}
                \label{fig:3-connected}
        \end{subfigure}
        \caption{Optimal topologies with different connectivity for R1}\label{fig:connectivity}
\end{figure*}

\subsection{A case study}
Utilizing the CRBND model, we compute optimal solutions for R1, and then the impact of several parameters on optimal solutions is evaluated.
This case is studied from the cost perspective, i.e., $c_1 : c_4 = 10 :1$. The consideration from the reliability perspective, is presented in the next subsection.

In R1, there is an fiber link between nodes $0$ and $8$ represented by a thick line. The optimal topologies for $K=1,2,3$ are presented in Fig. \eqref{fig:connectivity}. Fig. \eqref{fig:connectivity}(a) shows an $1$-connected topology with $16$ FSO transceivers and $0$ mirrors. Fig. \eqref{fig:connectivity}(b) shows a $2$-connected topology consisting of $20$ FSO transceivers, $2$ mirrors and $2$ mirror stations.
Nodes $14$ and $15$ have mirrors to connect the FSO nodes that are not in line of sight. Note that paths that use the fiber link between nodes $0$ and $8$ are link-disjoint. Fig. \eqref{fig:connectivity}(c) shows a $3$-connected graph with $38$ FSO transceivers and $3$ mirrors. There are two mirrors deployed by node $12$. One is used to connect the nodes $0$ and $7$, the other one is used to connect nodes $0$ and $4$. Mirror node $14$ is not used in order to reduce the cost of leasing mirror nodes. Note that the link $\{0,12\}$ corresponds to $2$ FSO transceivers in node $0$.

In R2, we study the impact of $K$ and the number of available optical fibers on the optimal solution. We consider four cases regarding the number of optical fibers: $0,4,8,12$. For each case, we test $10$ instances and then take the average over the obtained results. The number of FSO transceivers, mirrors, leased mirror nodes for $K=1,2,3,4$ are plotted in Fig. \ref{fig:cost}.

In Fig. \ref{fig:fso}, we can see that as $K$ increases, the number of FSO transceivers increases dramatically, which is more significant than the increase of the number of mirrors in Fig. \ref{fig:mirrorNum} and the number of mirror nodes in Fig. \ref{fig:mirrorNodes}. This is because an FSO link needs two FSO transceivers and some mirror links also need FSO transceivers. However, as illustrated in Fig. \ref{fig:fso}, when more optical fibers are available, then the number of FSO transceivers grows slowly with the increase of $K$. By adding $12$ optical fibers, the number of FSO transceivers for $K=4$ decreases by $40$ compared to the case without optical fibers, thus, around $90\%$ of FSO transceivers are saved.

The increasing trend of the number of mirror in Fig. \ref{fig:mirrorNum} is similar to the trend of the number of used mirror nodes in Fig. \ref{fig:mirrorNodes}. However, the number of used mirror nodes increases slower than the number of mirrors, since a mirror node can accommodate more than one mirrors. The deployment of optical fibers slows down the increasing trend. The network reliability is improved when $K$ increases and more fiber links are added, which is trivial and we omit the related analysis.

\begin{figure}[!hbtp]
        \centering
        \begin{subfigure}[b]{0.32\textwidth}
                \centering
                \includegraphics[trim = 0.3in 0.3in 0.5in 0.3in, clip, width=2in]{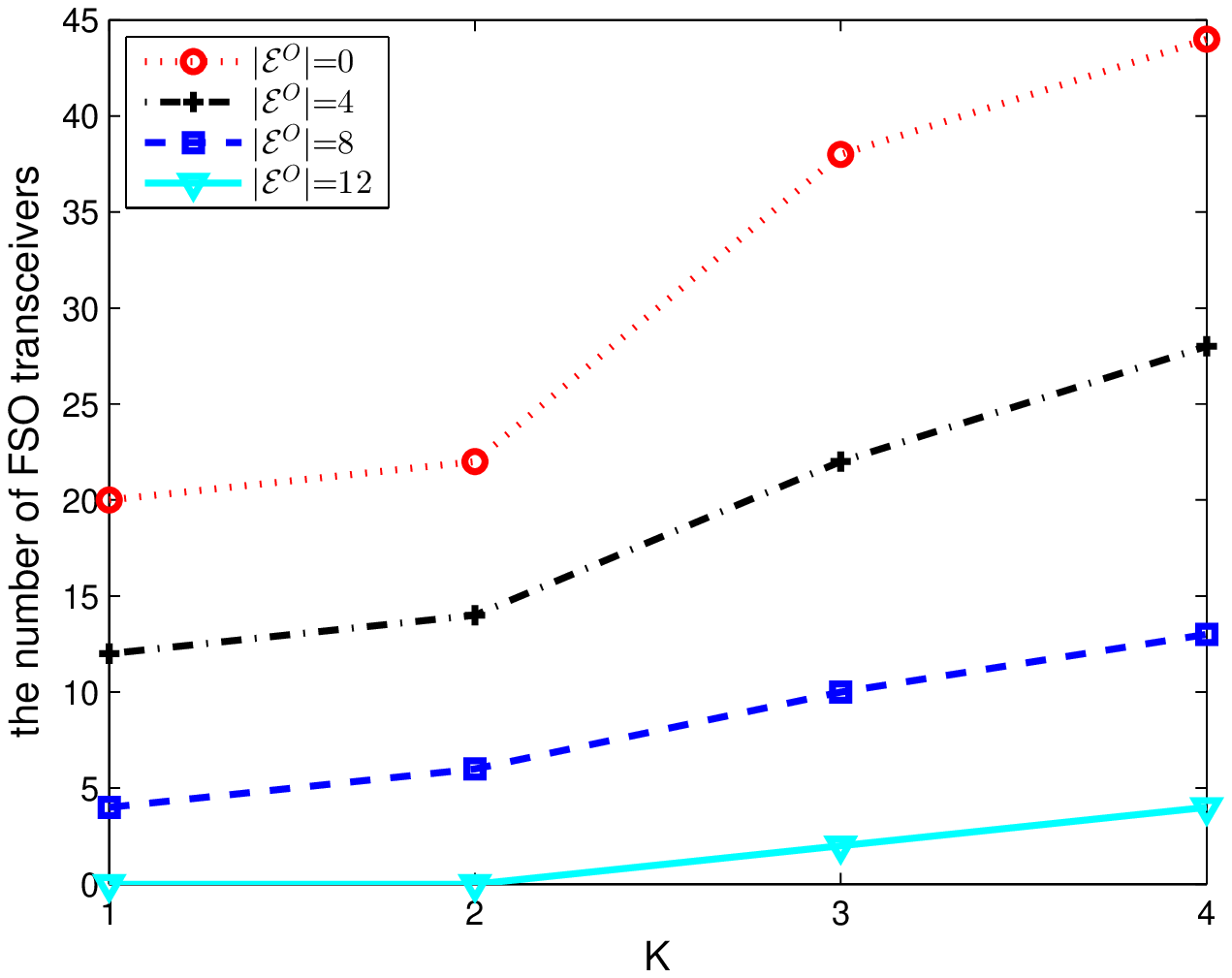}
                \caption{FSO transceivers}
                \label{fig:fso}
        \end{subfigure}
        \begin{subfigure}[b]{0.32\textwidth}
                \centering
                \includegraphics[trim = 0.3in 0.3in 0.5in 0.3in, clip, width=2in]{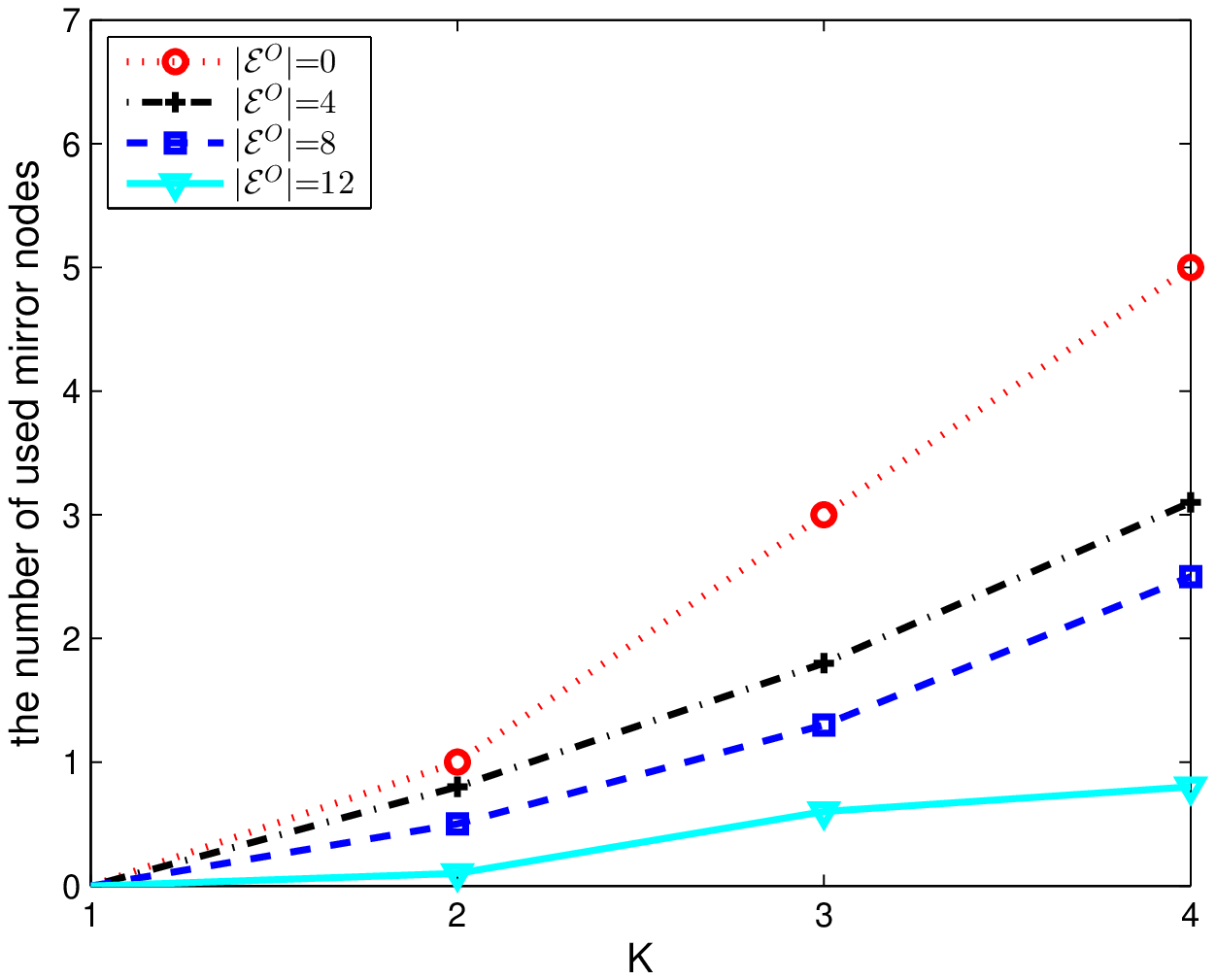}
                \caption{Mirror Stations}
                \label{fig:mirrorNodes}
        \end{subfigure}
        \begin{subfigure}[b]{0.32\textwidth}
                \centering
                \includegraphics[trim = 0.3in 0.3in 0.5in 0.3in, clip, width=2in]{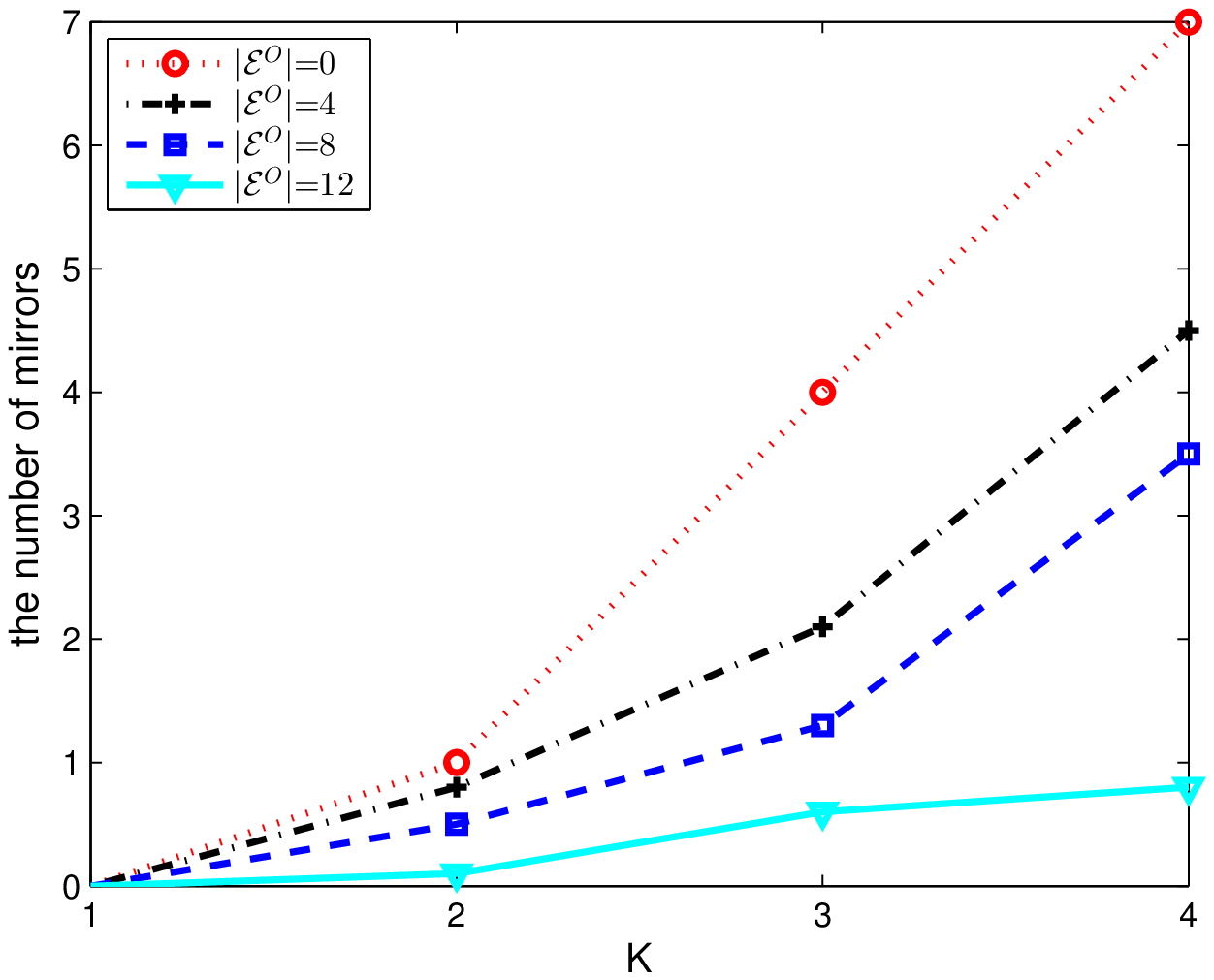}
                \caption{Mirrors}
                \label{fig:mirrorNum}
        \end{subfigure}
        \caption{The number of FSO transceivers, Mirrors, Mirrors Nodes v.s. Optical Fibers and $K$}\label{fig:cost}
\end{figure}

\subsection{Algorithm comparison}
In this section we make comparisons of the sequential computational approach (Algorithm 2) with the exact optimization model, CRBND, through a large number of experiments.

Table \ref{tab:comparison} shows optimal results and the running time of the two approaches for two settings, i.e., $c_4:c_1=10:1$ and $c_4:c_1=1:10$. The optimal results include the number of FSO transceivers ($F$), mirror nodes ($N$), mirrors ($M$) and the normalized reliability $R$. The normalized reliability is the total reliability of FSO and mirror links in the optimal solution divided by the total reliability of all potential FSO and mirror links. The entry ``$>$3h'' in the table denotes that the model or the algorithm cannot produce an optimal solution in 3 hours thus, ``--'' is inserted to the corresponding results.

We test a large variety of networks with different number of nodes. For each input network, we select in a random manner a $10\%$ of links to be links which are not in line of sight. Furthermore, we select $5\%$ pairs of FSO nodes between which optical fibers are used. All results are the average results of testing $10$ instances.

From Table \ref{tab:comparison}, we can see that the optimal solution corresponding to $c_4:c_1=10:1$ has higher reliability than that for the setting $c_4:c_1=1:10$.  However, the former one has higher cost in terms of the number of FSO transceivers, mirrors and leased mirror nodes. Under the same setting, i.e., the same $c_4:c_1$, the proposed sequential computation approach always obtains an optimal solution with higher reliability and thus needs higher cost. As we can see in the table, when the weight of reliability is higher, the gap of the reliability between the two approaches is smaller but the gap of the cost becomes bigger. Similarly, when the weight of cost is higher, the gap of cost becomes smaller but the gap of the reliability becomes bigger. For both approaches, the running time for $c_4:c_1=10:1$ is bigger than that for the $c_4:c_1=1:10$. Notably, the running time of the model CRBND for $c_4:c_1=10:1$ is much higher than that for $c_4:c_1=1:10$ while the running times for Algorithm 2 for two settings have relative small gap. This indicates that when the weight reliability increases, the running time of the model CRBND increases much faster than the sequential computation approach.

For small input networks, say $10$ nodes, the model CRBND can compute optimal solutions quickly. However, when the network size increases, even to a medium size, e.g., $26$ nodes, the model CRBND will be quite time consuming. Thus, we need to use the proposed sequential computation approach. For the impact of $K$ on the two methods, bigger $K$ takes longer running time for both methods. Even only increasing by one for $K$, the running times for both approaches increase a lot.

\begin{table}[!htbp]
\setlength{\tabcolsep}{3pt}
\center
\caption{Comparison of the cost the and running time}
\label{tab:comparison}
\begin{tabular}{c  c  c c c c c  c c  c c c c  c c  c c}
\toprule
\multicolumn{3}{c}{\multirow{2}{*}{Input}} & &  \multicolumn{6}{c}{CRBND ($c_4:c_1$)} & &  \multicolumn{6}{c}{Algorithm 2 ($c_4:c_1$)} \\
\cline{6-9} \cline{13-16}
\multicolumn{1}{c}{} & \multicolumn{1}{c}{}  & \multicolumn{1}{c}{}& &  \multicolumn{3}{c}{10:1} & \multicolumn{3}{c}{1:10} & & \multicolumn{3}{c}{10:1} & \multicolumn{3}{c}{1:10}\\
\cline{1-3} \cline{5-10} \cline{12-17}
$|\mathcal V|$ & $|\mathcal E|$ & $K$ & &  F,M,N & R &time(s) & F,M,N & R &time(s) & & F,M,N & R &time(s)  & F,M,N & R &time(s)\\
\hline
\multirow{2}{*}{10} &\multirow{2}{*}{74} &2  &  &10,5,5 &0.43 &16   &10,0,0 &0.12 &3  &   &12,4,3 &0.49 &3  &11,0,0 &0.16 &2\\
                                        &  &3 &  &17,5,5 &0.53 &23    &16,3,2 &0.33 &6  &  &20,5,4 &0.56 &5   &17,3,2 &0.36 &3 \\
\hline
\multirow{2}{*}{16} &\multirow{2}{*}{100} &2 &  &18,6,6 &0.40 &243   &12,1,1 &0.08 &65 &  &24,4,3 &0.36 &84   &14,0,0 &0.13 &50\\
                                          & &3  & &27,7,6 &0.50 &325   &16,1,1 &0.14 &98  &  &35,4,2 &0.53 &102   &22,4,2 &0.21 &87\\
\hline
\multirow{2}{*}{20} &\multirow{2}{*}{186} &2 &  &26,4,2 &0.31 &1568    &18,2,1 &0.06 &896 &  &40,7,4 &0.32 &876   &27,6,5 &0.18  &546\\
                                       &  &3  &  &34,6,4 &0.38 &2285   &22,3,2 &0.11 &1330  &  &48,7,6 &0.41 &1540   &30,5,2 &0.19 &959\\
\hline
\multirow{2}{*}{26} &\multirow{2}{*}{318} &2 &   &36,8,6   &0.25 &6578   &28,4,3 &0.05 &3674 &   &55,4,3 &0.27 &2425   &35,3,2 &0.12 &1034\\
                                       &  &3 &   &45,10,8  &0.34 &8854   &36,5,5 &0.09  &5578 &  &61,7,3 &0.38 &5789  &40,5,2 &0.15 &2578\\
\hline
\multirow{2}{*}{30} &\multirow{2}{*}{405} &2 &  &-- &-- &$>$ 3h     &-- &-- &$>$ 3h  &  &68,8,6 &0.23 &8764  &47,0,0 &0.08 &3983\\
                                          & &3  & &-- &-- &$>$ 3h   &-- &-- &$>$ 3h  &  &-- &-- &$>$ 3h      &54,6,4 &0.12 &7645\\
\bottomrule
\end{tabular}
\hfill
\end{table}

The benefits of using Algorithm 1 in Algorithm 2 on the running time are studied.
The tested networks are randomly generated with the following parameters $c_4:c_1 = 1:1$ and $K=2$. For networks with the same number of nodes, $10$ instances are tested and average results are presented. The running times for Algorithm 2 with Algorithm 1 (T1) and the running times for Algorithm 2 without Algorithm 1 (T2) are shown in Table \ref{tab:algorithm1}. We see that Algorithm 1 helps a lot to reduce the running time of Algorithm 2.

\begin{table}[!htbp]
\setlength{\tabcolsep}{17pt}
\center
\caption{Comparison of the running time of Algorithm 2 with and without Algorithm 1}
\label{tab:algorithm1}
\begin{tabular}{|c|c|c|c|c|c|c|}
\hline
         & 10 & 16 & 20 & 26 & 30\\
\hline
T1 (s)  & 4  &60  &689  &1123 &4456\\
\hline
T2 (s)  & 8  &115  &1764  &3565  &$>$3h\\
\hline
\end{tabular}
\hfill
\end{table}

\section{Conclusion} \label{sec:conclusion}
We have studied optimization formulation and solution approaches for designing FSO networks, taking into account cost and reliability as performance metrics. Network survivability is modelled by means of the $K$-connectivity requirement between FSO nodes. From an optimization viewpoint, the problem represents an unconventional setup within the domain of graph optimization, because of the possibility of using mirrors with distance consideration of such connections. Integer programming models based on the notions of network flow and path generation have been developed. The latter has been utilized to derive a sequential algorithm that, by the performance results, enables near-optimal solutions with much smaller computational effort in comparison to applying integer programming to the flow-based model. Moreover, the results illustrate the trade-off between cost and reliability, in particular for large-scale scenarios. Thus the study provides new insights into deploying cost-effective and high-performance networks with the FSO technology.

\bibliographystyle{IEEEtran}
\bibliography{ref}

\begin{thebibliography}{10}
\providecommand{\url}[1]{#1}
\csname url@samestyle\endcsname
\providecommand{\newblock}{\relax}
\providecommand{\bibinfo}[2]{#2}
\providecommand{\BIBentrySTDinterwordspacing}{\spaceskip=0pt\relax}
\providecommand{\BIBentryALTinterwordstretchfactor}{4}
\providecommand{\BIBentryALTinterwordspacing}{\spaceskip=\fontdimen2\font plus
\BIBentryALTinterwordstretchfactor\fontdimen3\font minus
  \fontdimen4\font\relax}
\providecommand{\BIBforeignlanguage}[2]{{%
\expandafter\ifx\csname l@#1\endcsname\relax
\typeout{** WARNING: IEEEtran.bst: No hyphenation pattern has been}%
\typeout{** loaded for the language `#1'. Using the pattern for}%
\typeout{** the default language instead.}%
\else
\language=\csname l@#1\endcsname
\fi
#2}}
\providecommand{\BIBdecl}{\relax}
\BIBdecl

\bibitem{Ford2013}
R.~Ford, C.~Kim, and S.~Rangan, ``{Opportunistic third-party backhaul for
  cellular wireless networks},'' \emph{ArXiv e-prints}, 2013.

\bibitem{Tipmongkolsilp2011}
O.~Tipmongkolsilp, S.~Zaghloul, and A.~Jukan, ``{The evolution of cellular
  backhaul technologies: current issues and future trends},'' \emph{IEEE
  Communications Surveys Tutorials}, vol.~13, no.~1, pp. 97--113, 2011.

\bibitem{Frey99}
T.~Frey, ``{The effects of the atmosphere and weather on the performance of a
  mm-wave communication link},'' \emph{Applied Microwave and Wireless}, pp.
  76--80, 1999.

\bibitem{Chia2009}
S.~Chia, M.~Gasparroni, and P.~Brick, ``{The next challenge for cellular
  networks: backhaul},'' \emph{IEEE Microwave Magazine}, vol.~10, no.~5, pp.
  54--66, 2009.

\bibitem{Demers2011}
\BIBentryALTinterwordspacing
F.~Demers, H.~Yanikomeroglu, and M.~St-Hilaire, ``{A survey of opportunities
  for free space optics in next generation cellular networks},'' in \emph{the
  9th Annual Communication Networks and Services Research Conference (CNSR)},
  2011, pp. 210--216. [Online]. Available:
  \url{http://dx.doi.org/10.1109/CNSR.2011.38}
\BIBentrySTDinterwordspacing

\bibitem{Chan06}
V.~Chan, ``{Free-space optical communications},'' \emph{IEEE/OSA Journal of
  Lightwave Technology}, vol.~24, no.~12, pp. 4750--4762, 2006.

\bibitem{Refai2006}
H.~H. Refai, J.~J. Sluss, H.~H. Refai, and M.~Atiquzzaman, ``Finding
  failure-disjoint paths for path diversity protection in communication
  networks,'' \emph{Optical Engineering}, vol.~45, no.~2, p. 025003, 2006.

\bibitem{Ghassemlooy2012}
Z.~Ghassemlooy, W.~Popoola, and S.~Rajbhandari, \emph{Optical Wireless
  Communications: System and Channel Modelling with MATLAB}.\hskip 1em plus
  0.5em minus 0.4em\relax CRC Press, 2012.

\bibitem{Kerivin2005}
H.~Kerivin and A.~Mahjoub, ``Design of survivable networks: A survey.''
  \emph{Networks}, vol.~46, no.~1, pp. 1--21, 2005.

\bibitem{Khandekar2013}
\BIBentryALTinterwordspacing
R.~Khandekar, G.~Kortsarz, and Z.~Nutov, ``On some network design problems with
  degree constraints,'' \emph{Journal of Computer and System Sciences},
  vol.~79, no.~5, pp. 725--736, 2013. [Online]. Available:
  \url{http://dx.doi.org/10.1016/j.jcss.2013.01.019}
\BIBentrySTDinterwordspacing

\bibitem{Bendali2010}
F.~Bendali, I.~Diarrassouba, A.~Mahjoub, and J.~Mailfert, ``The edge-disjoint
  3-hop-constrained paths polytope,'' \emph{Discrete Optimization}, vol.~7,
  no.~4, pp. 222 -- 233, 2010.

\bibitem{Botton2013}
Q.~Botton, B.~Fortz, L.~Gouveia, and M.~Poss, ``Benders decomposition for the
  hop-constrained survivable network design problem,'' \emph{INFORMS Journal on
  Computing}, vol.~25, no.~1, pp. 13--26, 2013.

\bibitem{Zotkiewicz2010}
M.~Zotkiewicz, W.~Ben-Ameur, and M.~Pioro, ``Finding failure-disjoint paths for
  path diversity protection in communication networks,'' \emph{IEEE
  Communications Letters}, vol.~14, no.~8, pp. 776--778, August 2010.

\bibitem{Borah2012}
\BIBentryALTinterwordspacing
D.~Borah, A.~Boucouvalas, C.~Davis, S.~Hranilovic, and K.~Yiannopoulos,
  ``\BIBforeignlanguage{English}{{A review of communication-oriented optical
  wireless systems}},'' \emph{\BIBforeignlanguage{English}{EURASIP Journal on
  Wireless Communications and Networking}}, vol. 2012, no.~1, pp. 1--28, 2012.
  [Online]. Available: \url{http://dx.doi.org/10.1186/1687-1499-2012-91}
\BIBentrySTDinterwordspacing

\bibitem{Llorca2004}
J.~Llorca, A.~Desai, and S.~Milner, ``{Obscuration minimization in dynamic free
  space optical networks through topology control},'' in \emph{IEEE MILCOM},
  vol.~3, 2004, pp. 1247--1253 Vol. 3.

\bibitem{Zhuang2004}
J.~Zhuang, M.~Casey, S.~Milner, S.~Gabriel, and G.~Baecher, ``{Multi-objective
  optimization techniques in topology control of free space optical
  networks},'' in \emph{IEEE MILCOM}, vol.~1, 2004, pp. 430--435.

\bibitem{Cao2008}
X.~Cao, ``{An integer linear programming approach for topology design in OWC
  networks},'' in \emph{IEEE GLOBECOM Workshops}, 2008, pp. 1--5.

\bibitem{Abhish2007}
\BIBentryALTinterwordspacing
A.~Kashyap, K.~Lee, M.~Kalantari, S.~Khuller, and M.~Shayman, ``Integrated
  topology control and routing in wireless optical mesh networks,''
  \emph{Computer Networks}, vol.~51, no.~15, pp. 4237 -- 4251, 2007. [Online].
  Available:
  \url{http://www.sciencedirect.com/science/article/pii/S138912860700151X}
\BIBentrySTDinterwordspacing

\bibitem{Son2010}
I.~Son and S.~Mao, ``{Design and optimization of a tiered wireless access
  network},'' in \emph{IEEE INFOCOM}, 2010, pp. 1--9.

\bibitem{Zhou2013}
H.~Zhou, A.~Babaei, S.~Mao, and P.~Agrawal, ``Algebraic connectivity of degree
  constrained spanning trees for fso networks,'' in \emph{IEEE ICC}, June 2013,
  pp. 5991--5996.

\bibitem{Ouveysi2010}
\BIBentryALTinterwordspacing
I.~Ouveysi, F.~Shu, W.~Chen, G.~Shen, and M.~Zukerman, ``Topology and routing
  optimization for congestion minimization in optical wireless networks,''
  \emph{Optical Switching and Networking}, vol.~7, no.~3, pp. 95 -- 107, 2010.
  [Online]. Available:
  \url{http://www.sciencedirect.com/science/article/pii/S1573427710000068}
\BIBentrySTDinterwordspacing

\bibitem{pioro2004}
M.~Pi\'{o}ro and D.~Medhi, \emph{Routing, Flow, and Capacity Design in
  Communication and Computer Networks}.\hskip 1em plus 0.5em minus 0.4em\relax
  Morgan Kaufmann, 2004.

\bibitem{Son2014}
\BIBentryALTinterwordspacing
I.~K. Son, S.~Mao, and S.~K. Das, ``On joint topology design and load balancing
  in free-space optical networks,'' \emph{Optical Switching and Networking},
  vol. 11, Part A, pp. 92 -- 104, 2014. [Online]. Available:
  \url{http://www.sciencedirect.com/science/article/pii/S1573427713000507}
\BIBentrySTDinterwordspacing

\bibitem{Zhu2002}
X.~Zhu and J.~Kahn, ``Free-space optical communication through atmospheric
  turbulence channels,'' \emph{IEEE Transactions on Communications}, vol.~50,
  no.~8, pp. 1293--1300, 2002.

\bibitem{Garey1980}
M.~R. Garey and D.~S. Johnson, \emph{Computers and intractability: A guide to
  the theory of NP-completeness}.\hskip 1em plus 0.5em minus 0.4em\relax
  American Mathematical Society, 1980.

\bibitem{Gurobi}
``{Gurobi Optimizer},'' http://www.gurobi.com.

\bibitem{MOMENTUM}
``Momentum project, 2003 (updated in 2005),'' http://momentum.zib.de.

\end{thebibliography}

\end{document}